\algnewcommand{\lIf}[1]{\State\algorithmicif\ #1\ \algorithmicthen}
\algnewcommand{\EndlIf}{\unskip\ \algorithmicend\ \algorithmicif}
\theoremstyle{plain}
\newtheorem{theorem}{Theorem}
\newtheorem{lemma}[theorem]{Lemma}
\newtheorem{corollary}[theorem]{Corollary}
\newtheorem{prop}[theorem]{Proposition}
\theoremstyle{definition}
\newcommand{\vctr}[1]{\bm{#1}} 
\newcommand{\N}{\mathbb{N}}
\newcommand{\R}{\mathbb{R}}
\newcommand{\G}{\mathcal{G}}
\newcommand{\bnd}[1]{\mathrm{bd}(#1)}
\newcommand{\sigmavh}{\hat{\vctr{\sigma}}}
\newcommand{\Sigmaii}{\Sigma_{-i,-i}}
\newcommand{\Sigmaiinv}{\left(\Sigma_{-i,-i}\right)^{-1}}
\newcommand{\sigmaii}{\vctr{\sigma}_{-i,i}} 
\newcommand{\sigmabdi}{\vctr{\sigma}_{\bnd{i},i}}
\newcommand{\sigmabdiest}{\hat{\vctr{\sigma}}_{\bnd{i},i}}
\newcommand{\sii}{\vctr{s}_{-i,i}} 
\newcommand{\Sii}{S_{-i,-i}}
\DeclareMathOperator*{\argmax}{arg\,max}
\DeclareMathOperator*{\diagOP}{diag}
\DeclareMathOperator{\tr}{tr}
\newcommand{\diag}[1]{\diagOP\left(#1\right)}
\newcommand{\trace}[1]{\tr\left[#1\right]}
\newcommand{\Ps}{\mathcal{P}}
\newcommand{\Hs}{\mathcal{H}}
\newcommand{\Set}{\mathcal{S}}
\newcommand{\algonamefull}{Generalised Iterative Conditional Fitting}
\newcommand{\algoname}{GICF}
\newcommand{\packagename}{\texttt{gicf}}
\newcounter{rowcntr}[table]
\newcolumntype{N}{>{\refstepcounter{rowcntr}}c}
\begin{document}

\title{A unified approach to penalized likelihood estimation of covariance matrices in high dimensions}

\author{
Luca Cibinel\\ University of Oslo\\
\texttt{lucaci@math.uio.no}
\and
Alberto Roverato\\ University of Padova\\
\texttt{alberto.roverato@unipd.it}
\and
Veronica Vinciotti\\ University of Trento\\
\texttt{veronica.vinciotti@unitn.it}
}

\date{}

\maketitle

\begin{abstract}
We consider the problem of estimation of a covariance matrix for Gaussian data in a high dimensional setting. Existing approaches  include maximum likelihood estimation under a pre-specified sparsity pattern, $\ell_{1}$-penalized loglikelihood optimization and ridge regularization of the sample covariance. We show that these three approaches can be addressed in an unified way, by considering the constrained optimization of an objective function that involves two suitably defined penalty terms. This unified procedure exploits the advantages of each individual approach, while bringing novelty in the combination of the three.
We provide an efficient algorithm for the optimization of the regularized objective function and describe the relationship between the two penalty terms, thereby highlighting the importance of the joint application of the three methods. A simulation study shows how the sparse estimates of covariance matrices returned by the procedure are stable and accurate, both in low and high dimensional settings, and how their calculation is more efficient than existing approaches under a partially known sparsity pattern.
An illustration on sonar data shows is presented for the identification of the covariance structure among signals bounced off a certain material. The method is implemented in the publicly available \texttt{R} package \packagename.
\end{abstract}
\noindent\emph{Keywords}: graphical model, lasso regression, marginal independence, ridge regularization, sparsity pattern.

\section{Introduction}
The task of estimating covariance matrices from randomly sampled data is of fundamental importance for a wide range of statistical methodologies. The usual sample covariance matrix performs poorly in the high dimensional setting, where the number of cases, $n$, is small compared with the number of features, $p$. Its performance degrades as the ratio $p/n$ increases, with the estimator becoming singular when $p\geq n$. One of the main difficulties in the estimation of large covariance matrices comes from the fact that the number of free parameters, $p(p+1)/2$, grows quadratically in the number of features. As a consequence of this, most approaches rely on the idea of reducing dimensionality by imposing parsimony through assumptions on the size and structure of the effective parameters. A prominent approach to reduce the number of free parameters is to assume that the covariance matrix is sparse, in the sense that many of the off-diagonal entries are equal to zero; see \citet{fan2016overview,pourahmadi2011covariance,pourahmadi2013high,chi2014stable,lam2020high} for comprehensive overviews. A challenge for these approaches is being able to enforce the positive-definiteness constraint of the solution \citep{xue2012positive, pourahmadi2011covariance,rothman2012positive}.

When the features follow a multivariate Gaussian distribution, zero covariances correspond to marginal independence relationships. In the graphical modelling literature \citep{maathuis2018handbook}, this corresponds to the case of a bi-directed graph. Within this framework, \citet{chaudhuri2007estimation} considered the problem of the computation of the maximum likelihood estimate of the covariance matrix in the case where the zero pattern is known, and introduced an algorithm called the Iterative Conditional Fitting (ICF). The algorithm is efficient, as it decomposes the problem into a sequence of least squares optimization steps, but its
convergence has been proved only in the case when the sample covariance matrix has full rank.

In the area of Gaussian graphical models, where the interest is for modelling conditional independence relationships \citep{lauritzen1996graphical},  penalized likelihood methods \citep{yuan2007model} are well-established tools for the identification of sparsity pattern of the inverse covariance matrix in high dimensional problems. In contrast to the breadth of this literature on inverse covariance estimation
\citep{drton2017structure}, only a small number of studies has considered penalized methods for the estimation of covariance matrices \citep{lam2009sparsistency,bien2011sparse, wang2014coordinate,xu2022proximal}.  This discrepancy can be explained with the fact that the negative Gaussian loglikelihood is convex as a function of the inverse covariance matrix. This both ensures that minimizers are global optima and that fast algorithms, such as the graphical lasso, can be applied \citep{friedman2008sparse}. However, the negative loglikelihood is nonconvex as a function of the covariance matrix, and, typically, it has multiple modes, leading to significant computational difficulties in this case.

\citet{bien2011sparse} and \citet{wang2014coordinate} introduced two algorithms for the optimization of the $\ell_{1}$-penalized loglikelihood for covariance estimation, that we will shortly refer to as the covariance graphical lasso (covglasso) penalization. The more efficient algorithm of \citet{wang2014coordinate} develops a coordinate-descent approach, in the spirit of the graphical lasso algorithm \citep{friedman2008sparse}. Although the connection is not made explicit in the paper, the coordinate-descent approach of \citet{wang2014coordinate} relies on the same loglikelihood decomposition of \citet{chaudhuri2007estimation}. Similarly to that case, the application of the \citet{bien2011sparse} and \citet{wang2014coordinate}  algorithms is restricted to the case where the sample covariance has full rank.  When this is not the case, the authors suggest to apply the algorithm to a ridge-regularized version of the sample covariance \citep{le2020comment} but do not pursue this further. A similar suggestion is made by other related approaches \citep{xu2022proximal}. In a different context, \citet{warton2008penalized} considered the problem of the ridge regularization of the sample covariance matrix. In particular, \citet{warton2008penalized} showed how the ridge covariance estimator can be formally obtained from the optimization of a Gaussian loglikelihood, with the addition of a penalization proportional to the trace of the inverse covariance matrix.

We focus on the estimation of sparse covariance matrices in the high dimensional setting. The overall idea at the basis of this work is that the approach of \citet{chaudhuri2007estimation} to the computation of the maximum likelihood estimates under a given sparsity pattern, the algorithm of \citet{wang2014coordinate} for the
optimization of the $\ell_{1}$-penalized loglikelihood and the approach of \citet{warton2008penalized} to ridge regularization, can be addressed in a unified way within an appropriately defined regularized problem. In particular, the Gaussian loglikelihood function is optimized under a possibly partially known structure of zeros in the covariance and, more importantly, under
a modified covglasso regularization where the penalization on the diagonal entries is replaced with a penalization on the diagonal entries of the inverse covariance.

By unifying the three approaches, the procedure simultaneously exploits the advantages of each individual method as well as bringing novelty in the combination of the three. Firstly, the unification of the ICF algorithm of \citet{chaudhuri2007estimation} and that of \citet{wang2014coordinate} results in a computational tool that includes the two methods, and that is considerably more efficient than \citet{wang2014coordinate} in the case of sparse estimation under a partially known sparsity pattern, such as when a banded structure of zeros is assumed. Secondly, the replacement of the $\ell_{1}$-penalization on the diagonal of the covariance with a penalty term applied to the diagonal of the inverse covariance results in a further generalization of the covglasso method that now also incorporates a ridge regularization component in the spirit of \citet{warton2008penalized}. This has a number of advantages. In particular, connected with the literature of lasso regression methods with error variance estimation \citep{yu2019estimating}, a penalization of the diagonal of the inverse covariance instead of the covariance has the effect of penalizing small error variances, rather than large ones.
Moreover, from the practical viewpoint, the ridge regularization component makes the optimization problem well-defined also when $S$ is singular, and therefore extends covglasso also to the high dimensional setting. In the case where the covglasso penalty is set to zero, our approach amounts to applying the ridge regularization method of \citet{warton2008penalized} to the estimation of covariance matrices under a given sparsity pattern, thereby expanding the area of application of the ICF algorithm of \citet{chaudhuri2007estimation}.

Unifying the approaches under the same framework has a further advantage of allowing simultaneous identification of the optimal amount of ridge and covglasso regularization. A large amount of penalization results in a diagonal estimate of the covariance, and we provide the maximum theoretical values of both the ridge regularization parameter and the covglasso penalty term. This facilitates the practical implementation of the method, but also makes it evident that the values of the two terms should not be determined independently of each other because their maximum theoretical values are interconnected. We provide a Generalized Iterative Conditional Fitting (\algoname) algorithm for the optimization of the objective function and the selection of the optimal tuning parameters via cross-validation. This is implemented in the \texttt{R} package \packagename; see the `code and data availability' section for details.
A simulation study shows that the optimal value of ridge regularization depends both on the amount of covglasso penalization and on other features, such as the sample size and the underlying level of sparsity. Although the application of ridge regularization is required when $n$ is smaller than $p$, our results show that it is generally useful also in the case where the ratio $p/n$ is small, even if $n$ is larger than $p$, while its importance vanishes when $n$ is sufficiently large compared to $p$.
Finally, we illustrate an application of our approach on sonar data, in the context of a classification problem whose performance depends on the quality of the estimated covariance matrices.

The rest of this paper is organized as follows. In Section~\ref{SEC:covariance estimation} we provide the background on covariance estimation, as required for this paper. The proposed unified approach is presented in
Section~\ref{SEC:unifying.approach}. The \algoname\ algorithm is described in Section~\ref{SEC:hybrid.procedure} whereas a more detailed pseudocode can be found in Appendix~\ref{APP:pseudocodes}. The practical implementation of the method, including the derivation of the maximum theoretical values of the penalty parameters, is given in Section~\ref{SCT:model.selection}. The simulation study and the illustration on sonar data can be found in Sections~\ref{SCT:simulations} and \ref{SEC:application}, respectively. Finally, Section~\ref{SEC:conclusions} contains a brief discussion. Proofs are deferred to Appendices~\ref{AX:bounds} and \ref{AX:proofs}.

\section{Sparse covariance estimation}\label{SEC:covariance estimation}
Let $Y$ be an $n\times p$ data matrix whose rows are i.i.d. observations from a Gaussian random vector with zero mean and covariance matrix $\Sigma=(\sigma_{jk})_{j,k\in V}$, with $V=\{1,\ldots, p\}$. The Maximum Likelihood Estimator (MLE) of $\Sigma$ maximizes the loglikelihood,
\begin{align}\label{EQ:log.lik}
\ell_{Y,n}(\Sigma) = -\log{|\Sigma|} -\frac{1}{n}\tr(\Sigma^{-1}Y^{\top}Y)= -\log{|\Sigma|} -\tr(\Sigma^{-1}S),
\end{align}
over the cone of positive definite matrices $\mathcal{S}_{\succ}$, where $|\Sigma|$ denotes the determinant of $\Sigma$, $\tr(\cdot)$ the trace of a matrix and $S=\dfrac{1}{n}Y^{\top}Y$ is the matrix of sums of squares and product of the columns of $Y$. As it will become clear in Proposition~\ref{THM:shrinkage.transform}, it is useful for us to write explicitly that the loglikelihood depends on the sample size $n$. When the latter coincides with the number of rows of $Y$, we will simplify the notation to $\ell_{Y}(\Sigma)$.
We recall that there is no loss of generality in the assumption that the mean vector is known. In the more general setting where the mean vector needs to be estimated, one can center the columns of $Y$ to have zero mean. Then, $S$ is the sample covariance matrix and $\ell_{Y}(\Sigma)$ in (\ref{EQ:log.lik}) is a profile loglikelihood; see \citet[][Section~2.2]{chaudhuri2007estimation} for more details.

We consider the case where $\Sigma$ is sparse in the sense that some off-diagonal entries are equal to zero. We represent the sparsity pattern through  a graph $\G=(V, E)$ with vertex set $V$ and edge set $E\subset V\times V$ made of unordered pairs of vertices $\{j, k\}$ with $j\neq k$. Two vertices $j,k\in V$ are said to be adjacent when they are connected by an edge, that is $\{j,k\}\in E$ and, with a slight abuse of notation, we also write this as $\{j,k\}\in \G$. Furthermore, a graph $\G$ is complete if
$\{j,k\}\in\G$ for every pair of distinct vertices $j,k\in V$. In the graphical modelling literature, these edges are conventionally represented as bidirected lines \citep{richardson2003markov}. We say that $\Sigma$ is adapted to a graph $\G=(V, E)$ if, for every $j\neq k$, it holds that $\{j,k\}\notin \G$ implies $\sigma_{jk}=0$. We denote by $\mathcal{S}_{\succ}(\G)$ the cone of positive definite matrices which are adapted to $\G$, and we remark that a matrix $\Sigma\in \mathcal{S}_{\succ}(\G)$ may have a zero pattern that is sparser than that encoded by $\G$. Indeed, for every graph $\G^{\prime}=(V, E^{\prime})$  such that $E^{\prime}\subseteq E$, it holds that $\mathcal{S}_{\succ}(\G^{\prime})\subseteq \mathcal{S}_{\succ}(\G)$.

\subsection{Maximum likelihood estimate} \label{sec:MLE}
Maximizing (\ref{EQ:log.lik}) over $\mathcal{S}_{\succ}(\G)$, that is under the assumption that $\Sigma$ is adapted to $\G$, is a challenging task because the objective function is not concave and may have multiple local maxima. A solution to this problem is provided by \citet{chaudhuri2007estimation} by means of a procedure called the Iterative Conditional Fitting (ICF) algorithm.

In order to describe this algorithm, consider, for every $i\in V$, a partition of the matrices $\Sigma$, $S$ and $Y$ as follows
\begin{equation}\label{EQ:decomp.utils}
\Sigma =
\begin{pmatrix}
\Sigmaii & \sigmaii\\
\sigmaii^\top & \sigma_{ii}
\end{pmatrix},
\qquad
S =
\begin{pmatrix}
\Sii & \sii\\
\sii^\top & s_{ii}
\end{pmatrix},
\qquad
Y =
\begin{pmatrix}
Y_{-i} & \vctr{y}_{i}
\end{pmatrix},
\end{equation}
where the subscripts are subsets of $V$ that we use as row and column indexes.  Note also that $-i$ and $i$ are shorthand for $V\setminus \{i\}$ and $\{i\}$, respectively, and that we use bold lowercase symbols for vectors. From this, we define
\[Z_{-i} = Y_{-i}\Sigmaiinv,\]
which will act as a matrix of pseudo-predictors in a regression model of the $i$-th variable against the remaining variables.
In order to see this, for an arbitrary $i\in V$, we partition the full loglikelihood in (\ref{EQ:log.lik}) into the loglikelihood $\ell_{Y_{-i}}(\Sigmaii)$ of the marginal distribution of all variables excluding the $i$-th variable  and the loglikelihood $\ell_{\vctr{y}_{i}\mid Z_{-i}}(\tau_{i}, \sigmaii)$ of the conditional distribution of the $i$-th variable given the remaining variables, namely
\begin{equation}\label{EQ:log.lik.split}
\ell_{Y}(\Sigma) = \underbrace{
-\log|\Sigmaii| -\trace{\Sigmaiinv S_{-i,-i}}
}_{\ell_{Y_{-i}}(\Sigmaii)}
\quad
\underbrace{-\log{\tau_i} - \frac{1}{n\tau_i}\|\vctr{y}_i - Z_{-i}\sigmaii\|^2_2
 }_{\ell_{\vctr{y}_{i}\mid Z_{-i}}(\tau_{i}, \sigmaii)},
\end{equation}
with
\begin{equation}\label{EQ:tau}
\tau_i = \sigma_{ii} - \sigmaii^\top\Sigmaiinv\sigmaii.
\end{equation}
\citet{chaudhuri2007estimation} notice how $\ell_{\vctr{y}_{i}\mid Z_{-i}}(\tau_{i}, \sigmaii)$ in (\ref{EQ:log.lik.split}) is the loglikelihood of the Gaussian linear model, where $\vctr{y}_{i}$ is regressed on the pseudo-variables $Z_{-i}$, with vector of regression coefficients $\sigmaii$ and error variance $\tau_i$.

Knowledge of the graph $\G$ means that some off-diagonal entries of $\Sigma$ are known to be zero. In particular, if we denote by $\bnd{i}$ the boundary of $i\in V$, that is the set of vertices adjacent
to $i$ in $\G$, then the structure imposed by $\G$ implies that the entries of $\sigmaii$ are such that  $\sigma_{ij}=0$ for every $j\notin \bnd{i}$. The ICF algorithm maximizes $\ell_{Y}(\Sigma)$ numerically, by iteratively applying the decomposition (\ref{EQ:log.lik.split}) for $i\in V$. At every step,
it maximizes the conditional loglikelihood $\ell_{\vctr{y}_{i}\mid Z_{-i}}(\tau_{i}, \sigmaii)$, similarly to the computation of the MLEs. In particular, it amounts to the application of the usual least-squares formulae to the reduced system defined by the zero constraints on the regression coefficients $\sigmaii$, and then to the computation of the usual estimate of $\tau_{i}$ based on the residual sum of squares.

\subsection{Maximum penalized loglikelihood estimate}\label{SCT:maximum.penal.lik}
If a pre-specified zero pattern in $\Sigma$ is not provided, or only partially provided, this needs to be identified from the available data. To this end, one can apply a lasso penalization \citep{tibshirani1996regression}, so that many parameters will be shrinked to be exactly equal to zero. When the interest is for likelihood-based regularization of the covariance matrix, this amounts to maximizing
\[\ell_{Y}(\Sigma) - \lambda\,\|\Sigma\|_{1},\]
where $\|\Sigma\|_{1}=\sum_{j,k\in V}|\sigma_{jk}|$ denotes the $\ell_{1}$ norm of $\Sigma$ and $\lambda\geq 0$ is the shrinkage parameter. We will refer to this optimization task and to its solution as the covariance graphical lasso (covglasso) problem  and the covglasso estimator, respectively. Similar to the optimization of the loglikelihood, also the covglasso  penalized loglikelihood  may have multiple local maxima. Algorithms for its optimization are provided by \citet{bien2011sparse} and \citet{wang2014coordinate}. More specifically, \citet{bien2011sparse} approach the problem with a majorization-minimization algorithm, whereas \citet{wang2014coordinate} implements a coordinate-descent algorithm that, similarly to the ICF algorithm, exploits the loglikelihood decomposition in (\ref{EQ:log.lik.split}).

It is worth mentioning that the covglasso problem is feasible of a more general formulation where different shrinkage parameters $\lambda_{jk}$ are allowed for the different entries $\sigma_{jk}$ of $\Sigma$. This feature can be exploited in the case where the sparsity pattern is partially known. In practical terms, if the desired zero pattern is encoded by a graph $\G=(V, E)$, then for all $\{j,k\}\notin E$, with $j\neq k$, the values $\lambda_{jk}$ are set to a sufficiently large value to assure that the solution belongs to $\mathcal{S}_{\succ}(\G)$. The remaining $\lambda$-terms can be set to a  positive value, in which case the covglasso estimator may have a sparser structure than that encoded by $\G$. If, instead, they are all set to zero, then the covglasso estimator will coincide with the MLE of $\Sigma\in \mathcal{S}_{\succ}(\G)$.

When estimating a sparse covariance matrix, it is important to distinguish between the diagonal and off-diagonal elements, since the diagonal entries cannot be equal to zero. Connected to the previous point, this can be done by associating a penalty $\lambda\geq 0$ to the off-diagonal entries and a, possibly different, penalty $\kappa\geq 0$ to the diagonal. In this more general formulation, the covglasso penalized loglikelihood can be written as
\begin{align}\label{EQ:covglasso.penalty}
\ell_{Y}(\Sigma)-\lambda\,\|\Sigma-\diag{\Sigma}\|_{1} - \kappa\,\|\diag{\Sigma}\|_{1},
\end{align}
where $\diag{\cdot}$ is the diagonal matrix operator. A common choice is to set $\kappa$ equal to zero, thereby completely excluding the diagonal from the penalization \citep{bien2011sparse}. This  was in fact the first formulation of the covglasso estimator and the case in which theoretical properties were investigated \citep{lam2009sparsistency}.

Although the focus of this work is on $\ell_{1}$ loglikelihood regularization, it is worth mentioning that alternative likelihood-based approaches for estimating sparse covariance matrices are possible. In particular, \citet{xu2022proximal} have recently introduced a likelihood-based estimate of $\Sigma$ by means of a proximal distance algorithm.

\subsection{High dimensional setting}

The existence of a global maxima of (\ref{EQ:log.lik}) is guaranteed only if $S$ has full rank \citep{chaudhuri2007estimation}. With probability one, this happens if and only if $n\geq p$, and the inequality is strict ($n>p$) in the case where the columns of $Y$ are centered \citep{dykstra1970establishing}. The same is true for the covglasso problem \citep{bien2011sparse}. Indeed, convergence of both the ICF and the algorithms introduced by \citet{bien2011sparse} and \citet{wang2014coordinate} is shown under the assumption that $S$ has full rank.

While the ICF algorithm is designed for the computation of MLEs, and thus its application makes sense only when MLEs exist,  the covglasso is meant to be applied in the high dimensional setting. So the restriction of $S$ being full rank is quite a limitation in this case. In their paper,  \cite{bien2011sparse} suggest that, when $S$ is singular, it should be replaced by $(S+\kappa\, I_{p})$, where $I_{p}$ is the $p\times p$ identity matrix and  $\kappa$ has a small positive value. In a different context, the same suggestion is given by \citet{xu2022proximal} with respect to their proximal distance algorithm. However, this suggestion is not pursued by either of the two studies and no indication is given on how to choose the value of $\kappa$, or the potential impact of this choice on the result. In contrast to this, in a regression framework, \citet{warton2008penalized}
establishes a connection between the ridge-regularized estimator $(S+\kappa\, I_{p})$ of $\Sigma$ and a maximum penalized loglikelihood estimator of $\Sigma$, which is then exploited in a procedure for the selection of the amount of ridge-regularization $\kappa\geq 0$.

Following  \citet{warton2008penalized}, the approach that we propose in this paper naturally includes this regularization within the main optimization problem. This not only makes the method feasible in the high dimensional setting, but it also allows us to select the regularization level in a statistically principled way.

\section{A unifying approach}\label{SEC:unifying.approach}

The formulation in (\ref{EQ:covglasso.penalty}) is typically used with $\kappa=0$, as variances cannot be zero \citep{bien2011sparse}. Here, we will consider instead the following penalty term,

\begin{align}\label{EQ:shrinkage-covglasso.penalty}
\begin{split}
\Ps_{\lambda,\kappa}(\Sigma)
&=\lambda\,\|\Sigma-\diag{\Sigma}\|_{1} + \kappa\,\|\diag{\Sigma^{-1}}\|_{1}\\[1ex]
&=\lambda\,\sum_{j,k\in V; j\neq k} |\sigma_{jk}| + \kappa\,\sum_{j\in V}\theta_{jj}.
\end{split}
\end{align}
That is, we modify the covglasso penalty by replacing the diagonal of the covariance matrix with the diagonal of its inverse, $\Sigma^{-1}=(\theta_{jk})_{j,k\in V}$, also called the concentration matrix. Accordingly, we focus on the estimator resulting from the application of the penalty (\ref{EQ:shrinkage-covglasso.penalty}) to the loglikelihood (\ref{EQ:log.lik}), that is,
\begin{equation}\label{EQ:shrinkage.covglasso.estimator}
\widehat{\Sigma}=
\argmax_{\Sigma\in\mathcal{S}_{\succ}(\G)}
\left\{
\ell_{Y}(\Sigma) - \Ps_{\lambda,\kappa}(\Sigma)
\right\}.
\end{equation}
We call this the \texttt{ridge-regularized covglasso estimator} because it comprises both a lasso and a ridge regularization.  Indeed, as we will clarify below, when $\lambda=0$ it results in a ridge-regularized estimate of $\Sigma$, and for this reason we will refer to $\kappa$ as the ridge penalty term. On the other hand, it is clear that when $\kappa=0$ it simplifies to the covglasso approach without diagonal penalization. It is also worth remarking that in (\ref{EQ:shrinkage.covglasso.estimator}) we consider the optimization of the objective function over $\mathcal{S}_{\succ}(\G)$, so that the ridge-regularized covglasso estimator will be a matrix adapted to the graph $\G$. In this way, we deal explicitly also with the case where some specific entries of $\Sigma$ are constrained to be equal to zero. When $\G$ is set to the complete graph, then no zero constraints are set a priori.
Putting everything together, the ridge-regularized covglasso estimator in (\ref{EQ:shrinkage.covglasso.estimator}) involves three parameters: $\lambda$, $\kappa$ and $\G$. By considering the possible combinations resulting from setting  $\lambda$ and/or $\kappa$ to zero, and/or $\G$ to the complete graph results in $2^{3}=8$ specific methods, which are detailed in  Table~\ref{TAB:different.types.estimators}. Some of these cases are already available in the literature and have been unified within the same framework, while some combinations are new. The result is a flexible tool that allows to apply, individually and jointly, a number of different approaches, and to perform model selection within the full space of models.
{
\begin{table}
\caption{Approaches to the estimation of $\Sigma$ resulting from the application of the ridge-regularized covglasso method with different settings of the parameters $\lambda$, $\kappa$ and $\G$. In each row, the symbols ``$\times$'' identify which constraints are applied.}
\label{TAB:different.types.estimators}
\begin{center}
\begin{tabular}{|N|*{3}{c|}p{23eM}|}
\hline
\multicolumn{1}{|c}{\multirow{2}{*}{}} & \multicolumn{3}{|c|}{Parameter settings} &  \multicolumn{1}{c|}{\multirow{2}{*}{Type of estimator}}\\
\cline{2-4}\setcounter{rowcntr}{0}
& $\lambda=0$ & $\kappa=0$ & $\G$ complete & \\
\hline\hline
\label{TABROW:different.types.estimators.mle} 1\;) &  $\times$   & $\times$ & $\times$ &  MLE for $\Sigma\in\mathcal{S}_{\succ}$\\
\hline
 \label{TABROW:different.types.estimators.hybmle} 2\;) &  $\times$   & $\times$ &  &  MLE for $\Sigma\in\mathcal{S}_{\succ}(\G)$ \citep{chaudhuri2007estimation}\\
\hline
\label{TABROW:different.types.estimators.rrmle} 3\;) & $\times$   &  & $\times$ &  ridge-regularized MLE for $\Sigma\in\mathcal{S}_{\succ}$ \citep{warton2008penalized}\\
\hline
\label{TABROW:different.types.estimators.cvg} 4\;) &  & $\times$ & $\times$ & covglasso for $\Sigma\in\mathcal{S}_{\succ}$ \citep{lam2009sparsistency} \\
\hline
\label{TABROW:different.types.estimators.hybcvg} 5\;) & & $\times$ &  &  covglasso for $\Sigma\in\mathcal{S}_{\succ}(\G)$  \citep{bien2011sparse}\\
\hline
\label{TABROW:different.types.estimators.hybrrmle} 6\;)  & $\times$ & & &  ridge-regularized MLE for $\Sigma\in\mathcal{S}_{\succ}(\G)$ \\
\hline
\label{TABROW:different.types.estimators.rrcvg} 7\;)  & & & $\times$ & ridge-regularized covglasso for $\Sigma\in\mathcal{S}_{\succ}$\\
\hline
\label{TABROW:different.types.estimators.hybrrcvg} 8\;) &  &  &  & ridge-regularized covglasso for $\Sigma\in\mathcal{S}_{\succ}(\G)$\\
\hline
\end{tabular}
\end{center}
\end{table}
}

In the rest of this section, we first discuss the motivation for replacing $\diagOP(\Sigma)$ with $\diagOP(\Sigma^{-1})$ in the penalty term. We then make the connection with the ridge-regularization introduced by $\kappa>0$. Finally, we provide a formulation of the objective function in (\ref{EQ:shrinkage.covglasso.estimator}) to be used in the algorithm for the computation of the estimator.

In order to motivate the $\ell_{1}$ penalization of the diagonal of $\Sigma^{-1}$, a key role is played by the relationship between $\tau_{i}$ in (\ref{EQ:tau}) and the $i$-th diagonal element of the inverse covariance matrix, $\theta_{ii}$, in (\ref{EQ:shrinkage-covglasso.penalty}). In particular, it holds that $\theta_{ii}=1/\tau_{i}$, for every $i\in V$ \citep[see][Corollary~5.8.1]{whittaker1990graphical}. With this, we can rewrite the penalty term  (\ref{EQ:shrinkage-covglasso.penalty}) in the form
\begin{align}\label{EQ:shrinkage-covglasso.penalty.2}
\Ps_{\lambda,\kappa}(\Sigma)= \lambda\,\sum_{j,k\in V; j\neq k} |\sigma_{jk}| + \kappa\,\sum_{j\in V} \frac{1}{\tau_{j}},
\end{align}
that is the proposed approach penalizes small $\tau_i$ values.  The parameter $\tau_i$ plays the role of the error variance in the regression component.  In order to see this more clearly, we extract
from the penalty $\Ps_{\lambda,\kappa}(\Sigma)$ in (\ref{EQ:shrinkage-covglasso.penalty.2})
only the elements $\Ps_{\lambda,\kappa}(\tau_{i}, \sigmaii)$  that involve the parameters $\sigmaii$ and $\tau_{i}$. Then, similarly to the loglikelihood decomposition in (\ref{EQ:log.lik.split}),
we write the penalized loglikelihood regression component of the objective function in  (\ref{EQ:shrinkage.covglasso.estimator}) as
\begin{align}\label{EQ:pen.cond.log.lik}
\ell_{\vctr{y}_{i}\mid Z_{-i}}(\tau_{i}, \sigmaii) - \Ps_{\lambda,\kappa}(\tau_{i}, \sigmaii)
= -\log{\tau_i} - \frac{1}{n\tau_i}\|\vctr{y}_i -  Z_{-i}\sigmaii\|^2_2
 - 2\lambda\,\|\sigmaii\|_{1}  - \kappa\times \frac{1}{\tau_{i}}.
\end{align}
When $\tau_{i}$ is known, the  optimization of (\ref{EQ:pen.cond.log.lik}) with respect to the regression coefficients $\sigmaii$
is the classical lasso regression problem \citep{tibshirani1996regression}. When $\tau_i$ is not known, specific methods designed to jointly estimate the regression coefficients and the error variance are considered. Several papers in the area of lasso regression discuss the difficulty of estimating the error variance, as it tends to be underestimated \citep{fan2012variance,reid2016study,yu2019estimating}. Hence, penalizing small $\tau_i$ values has the effect of counteracting such downward bias. In (\ref{EQ:pen.cond.log.lik}) this is done by the inclusion of the additional penalty term $1/\tau_{i}$, which results from the $\ell_{1}$-penalty on the diagonal of $\Sigma^{-1}$. On the other hand, the $\ell_{1}$-penalty on the diagonal of $\Sigma$, as in the covglasso, has the opposite effect of penalizing large $\tau_i$ values, as it can be seen in \citet[][equation~(4)]{wang2014coordinate}.
Interestingly, (\ref{EQ:pen.cond.log.lik}) belongs to the family of penalized loglikelihood functions introduced in \citet{yu2019estimating}, where the penalty term is a convex function of the natural parameters of the Gaussian multi-parameter exponential family.  Within their general family, \citet{yu2019estimating} investigate the properties of the estimates resulting from the penalty terms $\lambda\,\|\sigmaii\|_{1}/\tau_{i}$ and $\lambda\,\|\sigmaii\|_{1}^{2}/\tau_{i}$, and they both involve penalizing small error variances. A similar idea was previously considered by \citet{stadler2010ell}, who proposed an $\ell_{1}$-norm penalty on $\sigmaii/\sqrt{\tau_{i}}$ and noted that their approach has a close relation with the Bayesian lasso \citep{park2008bayesian}.

An additional advantage of the proposed penalty is that the penalization on the diagonal of the inverse covariance results in a loglikelihood where the empirical covariance matrix is replaced by its ridge-regularized version $(S+\kappa\, I_{p})$, which is positive definite for every $\kappa>0$. As a consequence,  the optimization problem is well-defined also when $S$ has not full rank. This is formally stated in equation (\ref{EQ:shrinkage.2}) of the following proposition.
\begin{prop}\label{THM:shrinkage.transform}
If we set $Y^{(\kappa)}=\left(Y^\top, \sqrt{n\kappa}\, I_{p}\right)^{\top}$ then the objective function in
(\ref{EQ:shrinkage.covglasso.estimator}) can be written as
\begin{align}
\ell_{Y}(\Sigma) - \Ps_{\lambda,\kappa}(\Sigma)
\label{EQ:shrinkage.1}
&= \ell_{Y^{(\kappa)},n}(\Sigma) - \lambda\,\|\Sigma-\diag{\Sigma}\|_{1}\\[1ex]
\label{EQ:shrinkage.2}
&= -\log{|\Sigma|} -\tr(\Sigma^{-1}S^{(\kappa)}) - \lambda\,\|\Sigma-\diag{\Sigma}\|_{1}
\end{align}
where $S^{(\kappa)}=\left(S+\kappa\, I_{p}\right)$ is a ridge-regularized version of $S$.
\end{prop}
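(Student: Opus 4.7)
The proof is essentially a direct calculation organised around two observations. The plan is to first rewrite the ridge part of the penalty $\mathcal{P}_{\lambda,\kappa}(\Sigma)$ as a trace, so that it can be absorbed into the likelihood, and then to show that the resulting trace term corresponds exactly to the loglikelihood evaluated at the augmented data matrix $Y^{(\kappa)}$ with the original sample size $n$ used in the normalisation.

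First I would observe that, since $\Sigma\in\mathcal{S}_{\succ}(\G)$ is positive definite, the diagonal entries of $\Sigma^{-1}=(\theta_{jk})$ are strictly positive, so the absolute values in the $\ell_{1}$ norm are redundant:
\begin{align*}
\|\diag{\Sigma^{-1}}\|_{1}
=\sum_{j\in V}|\theta_{jj}|
=\sum_{j\in V}\theta_{jj}
=\trace{\Sigma^{-1}}
=\trace{\Sigma^{-1}\,I_{p}}.
\end{align*}
Plugging this into the definition of $\Ps_{\lambda,\kappa}(\Sigma)$ in (\ref{EQ:shrinkage-covglasso.penalty}) and combining with the trace term in (\ref{EQ:log.lik}) gives
\begin{align*}
\ell_{Y}(\Sigma)-\Ps_{\lambda,\kappa}(\Sigma)
&= -\log|\Sigma|-\trace{\Sigma^{-1}S}-\kappa\,\trace{\Sigma^{-1}I_{p}}-\lambda\,\|\Sigma-\diag{\Sigma}\|_{1}\\
&= -\log|\Sigma|-\trace{\Sigma^{-1}(S+\kappa I_{p})}-\lambda\,\|\Sigma-\diag{\Sigma}\|_{1},
\end{align*}
using linearity of the trace. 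Setting $S^{(\kappa)}=S+\kappa I_{p}$ then yields (\ref{EQ:shrinkage.2}).

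Next I would identify this expression as the loglikelihood of the augmented data. Writing $Y^{(\kappa)}=(Y^{\top},\sqrt{n\kappa}\,I_{p})^{\top}$ as a block column, we have
\begin{align*}
(Y^{(\kappa)})^{\top}Y^{(\kappa)}
= Y^{\top}Y + n\kappa\, I_{p}^{\top}I_{p}
= Y^{\top}Y + n\kappa\, I_{p},
\end{align*}
so that $\tfrac{1}{n}(Y^{(\kappa)})^{\top}Y^{(\kappa)}=S+\kappa I_{p}=S^{(\kappa)}$. The key subtlety, and the one point that requires care, is that the normalisation in $\ell_{Y^{(\kappa)},n}$ uses the original sample size $n$ rather than the number of rows $n+p$ of $Y^{(\kappa)}$; this is precisely the reason the authors introduced the two-argument notation $\ell_{Y,n}(\Sigma)$ at the start of Section~\ref{SEC:covariance estimation}. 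Using that notation,
\begin{align*}
\ell_{Y^{(\kappa)},n}(\Sigma)
= -\log|\Sigma|-\tfrac{1}{n}\trace{\Sigma^{-1}(Y^{(\kappa)})^{\top}Y^{(\kappa)}}
= -\log|\Sigma|-\trace{\Sigma^{-1}S^{(\kappa)}},
\end{align*}
which matches the non-penalty part of (\ref{EQ:shrinkage.2}) and therefore establishes (\ref{EQ:shrinkage.1}).

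The proof is essentially two lines of algebra; I do not foresee a real obstacle. The only thing that could trip up a careless reader is the discrepancy between the number of rows of $Y^{(\kappa)}$ and the $n$ appearing in $\ell_{Y^{(\kappa)},n}$, which is why I would highlight that the scaling $\sqrt{n\kappa}$ in the definition of $Y^{(\kappa)}$ is chosen precisely so that the extra block contributes $n\kappa I_{p}$ to $Y^{\top}Y$ and, after dividing by the declared sample size $n$, produces the ridge shift $\kappa I_{p}$.
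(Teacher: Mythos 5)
Your proposal is correct and follows essentially the same calculation as the paper's proof: both rest on the identity $(Y^{(\kappa)})^{\top}Y^{(\kappa)}=Y^{\top}Y+n\kappa I_{p}$ and the observation that $\kappa\,\|\diag{\Sigma^{-1}}\|_{1}=\kappa\tr(\Sigma^{-1})$, differing only in the order in which the two displayed identities are derived. Your explicit remarks on the positivity of the diagonal of $\Sigma^{-1}$ and on the use of $n$ rather than $n+p$ in $\ell_{Y^{(\kappa)},n}$ are points the paper leaves implicit, but they do not change the argument.
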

\begin{proof}
See Appendix~\ref{PROOF:THM:shrinkage.transform}.
\end{proof}

In the case where $\lambda=0$, equation  (\ref{EQ:shrinkage.2}) in Proposition~\ref{THM:shrinkage.transform} coincides with the result of \citet[][Corollary~1]{warton2008penalized} where it is shown that  $S^{(\kappa)}$ is the maximum penalized loglikelihood estimator of $\Sigma$ when the penalty term is $\kappa\tr(\Sigma^{-1})$. Hence, in a similar way to  \citet{warton2008penalized}, we will exploit this result to approach in a formal way the problem of determining the optimal amount of ridge-regularization $\kappa$.

Finally, we turn to the practical problem of the computation of the ridge-regularized covglasso estimator in (\ref{EQ:shrinkage.covglasso.estimator}). As shown by
\citet{wang2014coordinate}, the covglasso estimator can be computed by iteratively maximizing the penalized conditional loglikelihood, in a similar way to the ICF algorithm for the computation of MLEs. However, differently to the diagonal entries of $\Sigma$ which are computed on the univariate marginal distributions, each error variance $\tau_{i}$, for $i=1,\ldots, p$, depends on the joint distributions of all the variables. For this reason, the solution to (\ref{EQ:shrinkage.covglasso.estimator}) cannot be computed by iteratively maximization of the building blocks (\ref{EQ:pen.cond.log.lik}). Proposition~\ref{THM:shrinkage.transform} shows how  problem (\ref{EQ:shrinkage.covglasso.estimator}) can be reformulated into an $\ell_{1}$-penalized problem on an augmented data matrix. More specifically,
equation  (\ref{EQ:shrinkage.1}) shows that the ridge-regularized covglasso estimator coincides with the covglasso estimator computed from (\ref{EQ:covglasso.penalty}) when $Y$ is replaced by the extended data matrix $Y^{(\kappa)}$, but $n$ is left unchanged.  It is worth recalling that an extended matrix similar to $Y^{(\kappa)}$ is used to relate ridge regression with the ordinary least-squares method; see, among others, \citep[][Section~4]{hastie2020ridge}. We can thus provide a decomposition of the objective function in (\ref{EQ:shrinkage.covglasso.estimator}), into a marginal and a conditional penalized loglikelihood,  that both exploits (\ref{EQ:shrinkage.1}) and explicitly deals with the constraint $\Sigma\in\mathcal{S}_{\succ}(\G)$.
\begin{corollary}\label{THM:final.loglik.decomposition} Consider the undirected graph  $\G=(V,E)$ and assume that $\Sigma\in\mathcal{S}_{\succ}(\G)$. Then for every $i\in V$ the  objective function in
(\ref{EQ:shrinkage.covglasso.estimator}) can be decomposed, similarly to (\ref{EQ:log.lik.split}), into a marginal and conditional penalized loglikelihood as follows,
\begin{align}
\ell_{Y}(\Sigma) - \Ps_{\lambda, \kappa}(\Sigma)
&= \ell_{Y^{(\kappa)}_{-i}, n}(\Sigmaii) - \lambda\|\Sigmaii - \diag{\Sigmaii}\|_1 +\nonumber\\
&\phantom{=} -\log{\tau_i} - \frac{1}{n\tau_i}\|\vctr{y}^{(\kappa)}_i - Z^{(\kappa)}_{\bnd{i}}\sigmabdi\|_2^2 - 2\lambda\|\sigmabdi\|_1, \label{EQ:lin.reg.touse}
\end{align}
where $Y^{(\kappa)}$ is as in Proposition~\ref{THM:shrinkage.transform},
$Z^{(\kappa)}_{\bnd{i}}$ is the submatrix of $Z^{(\kappa)}_{-i} = Y_{-i}^{(\kappa)}\Sigmaiinv$ made up of the columns  indexed by the elements of $\bnd{i}$, and similarly for the subvector $\sigmabdi$ of $\sigmaii$.
\end{corollary}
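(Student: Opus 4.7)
The plan is to derive the decomposition by combining Proposition~\ref{THM:shrinkage.transform} with the standard loglikelihood split (\ref{EQ:log.lik.split}), and then simplifying the penalty under the sparsity constraint $\Sigma\in\mathcal{S}_\succ(\G)$. First I would invoke equation (\ref{EQ:shrinkage.1}) to rewrite the objective as $\ell_{Y^{(\kappa)},n}(\Sigma)-\lambda\|\Sigma-\diag{\Sigma}\|_1$, which absorbs the ridge penalty into the data matrix while leaving the sample-size normalization at $n$. This reduces the problem to decomposing a purely covglasso-style objective, but computed on the augmented data $Y^{(\kappa)}$.

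Next I would apply the decomposition (\ref{EQ:log.lik.split}) to $\ell_{Y^{(\kappa)},n}(\Sigma)$, treating $Y^{(\kappa)}$ as the data matrix but keeping the normalizing constant $n$ (this is crucial and is exactly why the proposition was stated with $n$ unchanged). This yields the marginal term $\ell_{Y^{(\kappa)}_{-i},n}(\Sigmaii)$ together with a conditional term of the form $-\log\tau_i - \tfrac{1}{n\tau_i}\|\vctr{y}^{(\kappa)}_i - Z^{(\kappa)}_{-i}\sigmaii\|_2^2$, with $\tau_i$ defined as in (\ref{EQ:tau}) and pseudo-predictors $Z^{(\kappa)}_{-i}=Y^{(\kappa)}_{-i}\Sigmaiinv$.

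I would then split the penalty $\lambda\|\Sigma-\diag{\Sigma}\|_1$ along the same partition: the entries in the $(-i,-i)$ block contribute $\lambda\|\Sigmaii-\diag{\Sigmaii}\|_1$, while the $i$-th row and column contribute $2\lambda\|\sigmaii\|_1$ (the factor $2$ coming from the symmetry of $\Sigma$, since each off-diagonal entry $\sigma_{ij}$ appears in both the $(i,j)$ and $(j,i)$ positions of $\|\Sigma-\diag{\Sigma}\|_1$). Finally, I would use the constraint $\Sigma\in\mathcal{S}_\succ(\G)$, which forces $\sigma_{ij}=0$ for $j\notin\bnd{i}$: this simultaneously collapses $\|\sigmaii\|_1$ to $\|\sigmabdi\|_1$ and reduces the inner product $Z^{(\kappa)}_{-i}\sigmaii$ to $Z^{(\kappa)}_{\bnd{i}}\sigmabdi$, giving the stated formula (\ref{EQ:lin.reg.touse}).

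There is no real obstacle here: the argument is a bookkeeping exercise combining two already-proven facts. The only points that require care are (i) keeping the normalizing constant $n$, rather than $n+p$, when applying the loglikelihood decomposition to $\ell_{Y^{(\kappa)},n}$ — this is guaranteed by Proposition~\ref{THM:shrinkage.transform} and is what allows $\tau_i$ to retain its original meaning as the conditional variance of the $i$-th Gaussian component — and (ii) correctly tracking the factor of $2$ in the off-diagonal penalty, so that it matches the coefficient appearing in (\ref{EQ:lin.reg.touse}).
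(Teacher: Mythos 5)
Your argument is correct and matches the paper's own proof step for step: invoke equation (\ref{EQ:shrinkage.1}), apply the decomposition (\ref{EQ:log.lik.split}) to $\ell_{Y^{(\kappa)},n}(\Sigma)$ with the sample size kept at $n$, split the off-diagonal penalty as $\lambda\|\Sigmaii - \diag{\Sigmaii}\|_1 + 2\lambda\|\sigmaii\|_1$, and use $\Sigma\in\mathcal{S}_{\succ}(\G)$ to reduce $\sigmaii$ to $\sigmabdi$. Your added remarks on why $n$ (rather than $n+p$) must be retained and where the factor of $2$ comes from are exactly the right points of care.
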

\begin{proof}
See Appendix~\ref{PROOF:THM:final.loglik.decomposition}.
\end{proof}

As we will see in the next section, Corollary~\ref{THM:final.loglik.decomposition} implies that the computation of the ridge-regularized covglasso estimator can be reconduced to the existing efficient computational procedure of \citet{wang2014coordinate}.

\section{\algonamefull{} (\algoname) algorithm}\label{SEC:hybrid.procedure}
In this section we present the algorithm for the computation of (\ref{EQ:shrinkage.covglasso.estimator}). We name it Generalized Iterative Conditional Fitting (GICF), because it is an extension
of the ICF algorithm of \citet{chaudhuri2007estimation}.
In particular, the GICF algorithm generalizes the ICF by replacing the regression component of (\ref{EQ:log.lik.split}) with (\ref{EQ:lin.reg.touse}). This allows the inclusion of a lasso penalty on the regression coefficient \citep{wang2014coordinate}, the enforcement of  a constraint $\Sigma\in \Set_\succ(\G)$ and the penalization of the error variance. The latter is achieved implicitly by replacing $Y$ with $Y^{(\kappa)}$, as outlined in Proposition~\ref{THM:shrinkage.transform}.

Going into the details of the algorithm, we first consider the numerical maximization of the linear regression component (\ref{EQ:lin.reg.touse}). Although this function is not jointly concave in
$\tau_i$ and $\sigmabdi$ \citep[see][Section 3.1]{stadler2010ell}, as noted by \citet{wang2014coordinate}, it is still possible to employ a coordinate-descent algorithm to efficiently solve the optimization problem. This means that we can proceed by optimizing the objective function with respect to only one coordinate at a time, while keeping the others fixed.
More concretely, we first notice that, for a fixed $\sigmabdi$, the optimum with respect to $\tau_i$ can be straightforwardly computed from the partial derivative, leading to
\begin{equation}\label{EQ:cd.tau}
\hat{\tau}_i = \frac{1}{n}\|\vctr{y}^{(\kappa)}_i - Z^{(\kappa)}_{\bnd{i}}\sigmabdi\|_2^2.
\end{equation}
As for the optimum along the direction of any entry $\sigma_{ij}$ of $\sigmabdi$, we remark that, because of the $\ell_1$-penalty, the objective function is non-differentiable. However, it still concave with respect to $\sigma_{ij}$. Therefore, in line with \citet{wang2014coordinate}, we apply the theory of subdifferentials \citep{rockafellar1970convex}, which yields
\begin{equation}\label{EQ:cd.sigma}
\hat{\sigma}_{ij} = \frac{1}{C_{ij}}\mathcal{S}_{\lambda}\left(B_{ij}\right),
\end{equation}
where
\begin{gather*}
B_{ij} = \frac{1}{n\tau_i}\bigg(\left[(Z^{(\kappa)}_{\bnd{i}})^\top \vctr{y}^{(\kappa)}_i\right]_j - \sum_{\substack{m \in \bnd{i}\\ m \neq j}}\left[(Z^{(\kappa)}_{\bnd{i}})^\top Z^{(\kappa)}_{\bnd{i}}\right]_{jm}\sigma_{im}\bigg), \\[1ex]
 C_{ij} =\frac{1}{n\tau_i}\left[(Z^{(\kappa)}_{\bnd{i}})^\top Z^{(\kappa)}_{\bnd{i}}\right]_{jj}
\end{gather*}
and $\mathcal{S}_\lambda(x) = \text{sign}(x)(|x| - \lambda)_+$ is the soft-thresholding operator; it is worth recalling that the matrix $Z^{(\kappa)}_{\bnd{i}}$ has full rank for any $\kappa\geq 0$ when $n>p$ and for $\kappa > 0$ when $n \leq p$ and, thus, the quantity $\hat{\sigma}_{ij}$ is therefore well-defined under the aforementioned conditions.

The GICF algorithm optimizes (\ref{EQ:shrinkage.covglasso.estimator}) by iteratively decomposing the objective function as in Corollary~\ref{THM:final.loglik.decomposition}. Then, at every step, it maximizes the penalized regression component as described above. This is shortly outlined below, whereas the detailed pseudocode, divided between the main cycle and the intermediate linear regression phase, can be found in  Algorithms~\ref{ALG.algo.detailed} and \ref{ALG.cd.lin.detailed} of Appendix~\ref{APP:pseudocodes}:
\begin{enumerate}
\item Set the initial estimate of $\Sigma$ to $\widehat{\Sigma} = \widehat{\Sigma}^{(0)}$; for example $\widehat{\Sigma}^{(0)} = \diagOP(S^{(\kappa)})$, and iterate the following point until convergence.
\item For each $i \in V$:
	\begin{enumerate}
	\item fix the submatrix $\widehat{\Sigma}_{-i,-i}$ and compute the pseudo-predictors $\widehat{Z}^{(\kappa)}_{\bnd{i}}$ as in Corollary~\ref{THM:final.loglik.decomposition};
	\item compute $\hat{\tau}_i$ and $\sigmabdiest$ by iteratively repeating (\ref{EQ:cd.tau}) and (\ref{EQ:cd.sigma}) until convergence;
	\item update $\widehat{\Sigma}$ by setting $\hat{\vctr{\sigma}}_{-i,i} = (\sigmabdiest^\top,\, \vctr{0}^\top)^\top$ and $\hat{\sigma}_{ii} = \hat{\tau}_i + \hat{\vctr{\sigma}}_{-i,i}^\top\left(\widehat{\Sigma}_{-i,-i}\right)^{-1}\hat{\vctr{\sigma}}_{-i,i}$.
	\end{enumerate}
\end{enumerate}

We turn now to the convergence of the iterative procedure. Firstly, we note that Proposition~\ref{THM:shrinkage.transform} enables us to treat every instance of the optimization problem
(\ref{EQ:shrinkage.covglasso.estimator}) as in the particular case where $\kappa = 0$. As a consequence, when $\lambda = 0$, convergence is granted by the same result employed by \citet{chaudhuri2007estimation}, which, in turn, follows from \citet[Proposition 1]{drton2006maximum}. In the general case, where $\lambda>0$, we note that the arguments used in \citet{wang2014coordinate}, based on \citet[Theorem 4.1]{tseng2001convergence}, remain valid also under the  constraint $\Sigma \in \Set_\succ(\G)$.
This is formalized by the following proposition.

\begin{prop}\label{PROP.convergence}
Let $\mathcal{E} = \{\widehat{\Sigma}^{(r)}\}_{r\in\N}$ be the sequence of estimates produced by \algoname{} and $\mathcal{V} =\big\{ \ell_{Y}(\widehat{\Sigma}^{(r)}) - \Ps_{\lambda,\kappa}(\widehat{\Sigma}^{(r)})\big\}_{r \in \N}$ be the sequence of values of the objective function of (\ref{EQ:shrinkage.covglasso.estimator}). Assume that either $n > p$ or $\kappa > 0$, so that $S^{(\kappa)}$ is positive definite. Then, each accumulation point of $\mathcal{E}$ is a stationary point of $\ell_{Y}(\Sigma) - \Ps_{\lambda,\kappa}(\Sigma)$ and $\mathcal{V}$ converges to a finite value $\mathcal{V}_\infty$.
\end{prop}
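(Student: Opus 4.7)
The plan is to reduce the general problem to the special case $\kappa=0$, so that the convergence analysis can be carried out on the ordinary covglasso objective driven by the ridge--augmented data matrix $Y^{(\kappa)}$. By Proposition~\ref{THM:shrinkage.transform}, maximizing $\ell_Y(\Sigma) - \Ps_{\lambda,\kappa}(\Sigma)$ over $\mathcal{S}_{\succ}(\G)$ is equivalent to maximizing $\ell_{Y^{(\kappa)},n}(\Sigma) - \lambda\|\Sigma-\diag{\Sigma}\|_1$ over the same cone, with effective empirical covariance $S^{(\kappa)}$. Under the hypothesis that $n>p$ or $\kappa>0$, $S^{(\kappa)}$ is positive definite, and moreover, by Corollary~\ref{THM:final.loglik.decomposition}, each outer step of \algoname{} corresponds exactly to a coordinate ascent on the block $(\sigmabdi,\tau_i)$ of this reduced objective, with the zero entries of $\sigmaii$ outside $\bnd{i}$ being automatically fixed by the structural constraint.

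The case $\lambda = 0$ is immediate: the block updates coincide, after the reduction above, with those of the ICF algorithm of \citet{chaudhuri2007estimation} applied to $Y^{(\kappa)}$, constrained to $\mathcal{S}_{\succ}(\G)$. Since $S^{(\kappa)}\succ 0$, the MLE on $\mathcal{S}_{\succ}(\G)$ exists, and the convergence result \citep[Proposition~1]{drton2006maximum} used in \citet{chaudhuri2007estimation} applies verbatim to give that every accumulation point is a stationary point.

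For the case $\lambda>0$, I would invoke \citet[Theorem~4.1]{tseng2001convergence}, following the template of \citet{wang2014coordinate}. The verification amounts to checking four items. First, separability: the non-smooth penalty $\lambda\|\Sigma-\diag{\Sigma}\|_1$ splits as a sum of block-wise $\ell_1$ terms $2\lambda\|\sigmabdi\|_1$, while the smooth part $-\log|\Sigma|-\tr(\Sigma^{-1}S^{(\kappa)})$ is continuously differentiable on the open cone of positive-definite matrices. Second, compatibility with the constraint $\Sigma\in\mathcal{S}_{\succ}(\G)$: the graph simply restricts each block to the subvector $\sigmabdi$ of $\sigmaii$, i.e., an affine constraint, which is harmless for Tseng's framework. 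Third, well-posedness of the inner subproblem: the expressions (\ref{EQ:cd.tau})--(\ref{EQ:cd.sigma}) are uniquely determined whenever the Gram matrix $(Z^{(\kappa)}_{\bnd{i}})^\top Z^{(\kappa)}_{\bnd{i}}$ is positive definite, which is guaranteed by the full rank of $Y^{(\kappa)}_{-i}$ (itself a consequence of $S^{(\kappa)}\succ 0$). Fourth, compactness of sublevel sets: since $-\log|\Sigma|-\tr(\Sigma^{-1}S^{(\kappa)})$ is coercive on the PD cone when $S^{(\kappa)}\succ 0$, the iterates $\widehat\Sigma^{(r)}$ remain in a compact subset of $\mathcal{S}_{\succ}(\G)$, ruling out degeneracy of accumulation points.

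Once these ingredients are in place, Tseng's theorem yields that any accumulation point of $\mathcal{E}$ is a stationary point of the objective. The convergence of $\mathcal{V}$ then follows from two observations: by construction each sub-step of \algoname{} is a coordinate maximization, so the sequence $\mathcal{V}$ is non-decreasing; and the objective is bounded above because $-\log|\Sigma|-\tr(\Sigma^{-1}S^{(\kappa)})$ attains a finite maximum $-\log|S^{(\kappa)}|-p$ at $\Sigma=S^{(\kappa)}$ while the penalty is non-negative. A monotone bounded real sequence has a finite limit $\mathcal{V}_\infty$. The main obstacle I anticipate is the careful verification that Tseng's regularity conditions, and in particular the compactness of the level sets and the uniqueness of the coordinate-wise minimizer under the non-jointly-concave behaviour of the conditional objective in $(\sigmabdi,\tau_i)$ \citep[cf.][]{stadler2010ell}, continue to hold once the zero pattern induced by $\G$ is imposed; this is the point at which the constraint $\Sigma\in\mathcal{S}_{\succ}(\G)$ genuinely interacts with the argument of \citet{wang2014coordinate}.
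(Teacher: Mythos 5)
Your proposal is correct and follows essentially the same route as the paper, which justifies this proposition only by the paragraph preceding it: reduce to the $\kappa=0$ case via Proposition~\ref{THM:shrinkage.transform}, invoke \citet[Proposition~1]{drton2006maximum} (as in \citealp{chaudhuri2007estimation}) when $\lambda=0$, and invoke \citet[Theorem~4.1]{tseng2001convergence} via the argument of \citet{wang2014coordinate} when $\lambda>0$, noting that the graph constraint does not disturb it. Your write-up is in fact more explicit than the paper's about which of Tseng's regularity conditions need checking and about why $\mathcal{V}$ is monotone and bounded above.
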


Notice how in Proposition~\ref{PROP.convergence}, for $\lambda=0$, the result is true with respect to the classical notion of stationarity, whereas when $\lambda>0$ the objective function is non-differentiable and the result refers to directional stationarity, or d-stationarity for short; see (\citeauthor{li2020stationarity}, \citeyear{li2020stationarity}; \citeauthor{cui2021optimization}, \citeyear{cui2021optimization}, Chapter 6).

\section{Selection of penalty parameters and application of the method}\label{SCT:model.selection}
This section deals with the practical application of the method. More specifically, we provide the maximum theoretical values of the penalty parameters and give details on the cross-validation approach for the selection of the model.

The application of penalized likelihood methods requires the initial definition of a grid of penalty parameter values to obtain a path of solutions. To this aim, it is useful to identify the values of the pair $(\lambda, \kappa)$ that return a diagonal solution
so as to identify the suitable penalty parameter region within the domain $[0; +\infty)^{2}$. As it turns out, it is possible to exploit the structure of the \algoname\ algorithm to study the conditions under which the estimate becomes diagonal, as well as the interplay between the parameters $\kappa$ and $\lambda$. We first consider $\kappa$ fixed and provide the maximum theoretical value of the covglasso penalty.
\begin{theorem}\label{THM.diag.S}
Consider the ridge-regularized covglasso problem in (\ref{EQ:shrinkage.covglasso.estimator}).
For a given graph $\G = (V, E)$ and $\kappa \geq 0$ let,
\begin{align*}
  \lambda_{MAX}(\kappa) = \max_{\{i,j\}\in\G}\frac{|s_{ij}|}{(s_{ii} + \kappa)(s_{jj} + \kappa)}
\end{align*}
for  $E \neq \emptyset$, and $\lambda_{MAX}(\kappa)=0$ otherwise. Then, for every $\lambda \geq \lambda_{MAX}(\kappa)$ the following statements hold true,
\begin{enumerate}
\item[(i)] $\diag{S^{(\kappa)}}$ is a stationary point of the objective function of (\ref{EQ:shrinkage.covglasso.estimator}) in $\mathcal{S}_{\succ}(\G)$;
\item[(ii)] if $\widehat{\Sigma}^{(0)} = \diag{S^{(\kappa)}}$, then the output of the \algoname\ algorithm is $\widehat{\Sigma} = \diag{S^{(\kappa)}}$.
\end{enumerate}
\end{theorem}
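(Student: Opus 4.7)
The starting point is to invoke Proposition~\ref{THM:shrinkage.transform} and work with the equivalent objective
\[
f(\Sigma) \;=\; -\log|\Sigma| \;-\; \tr\!\left(\Sigma^{-1}S^{(\kappa)}\right) \;-\; \lambda\,\|\Sigma - \diag{\Sigma}\|_{1},
\]
so that the ridge term is absorbed into $S^{(\kappa)}$ and the problem reduces to a standard covglasso-style analysis with sample matrix $S^{(\kappa)}$. The key structural identities I will lean on are $\|\vctr{y}_{i}^{(\kappa)}\|_{2}^{2} = n(s_{ii}+\kappa)$ and $(\vctr{y}_{i}^{(\kappa)})^{\top}\vctr{y}_{j}^{(\kappa)} = n\, s_{ij}$ for $i\neq j$, which follow directly from the block definition $Y^{(\kappa)} = (Y^{\top}, \sqrt{n\kappa}\,I_{p})^{\top}$.

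For part (i), I would verify the first-order optimality conditions at $D := \diag{S^{(\kappa)}}$, restricted to the free parameters $\sigma_{ij}$ with $i=j$ or $\{i,j\}\in\G$. The smooth part of $f$ has symmetric-matrix gradient proportional to $-\Sigma^{-1}+\Sigma^{-1}S^{(\kappa)}\Sigma^{-1}$. Evaluated at the diagonal $D$, the $(i,i)$ entry vanishes because $D$ matches the diagonal of $S^{(\kappa)}$, and the $(i,j)$ entry for $i\neq j$ equals $s_{ij}/[(s_{ii}+\kappa)(s_{jj}+\kappa)]$. For each edge $\{i,j\}\in\G$, the subdifferential of the penalty $\lambda|\sigma_{ij}|$ at $\sigma_{ij}=0$ is $[-\lambda,\lambda]$, so the stationarity condition amounts to
\[
\Bigl|\tfrac{s_{ij}}{(s_{ii}+\kappa)(s_{jj}+\kappa)}\Bigr| \;\leq\; \lambda,
\]
which is exactly the hypothesis $\lambda\geq\lambda_{MAX}(\kappa)$. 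Entries with $\{i,j\}\notin\G$ are fixed by the constraint $\Sigma\in\mathcal{S}_{\succ}(\G)$ and impose no further condition. Positive definiteness of $D$ (guaranteed by $s_{ii}+\kappa>0$) ensures $D$ lies in the interior of the PD cone, so no additional multipliers arise. This establishes (i).

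For part (ii), I would trace a single outer sweep of GICF starting from $\widehat{\Sigma}^{(0)}=D$. At step $i$, the submatrix $\widehat{\Sigma}_{-i,-i}$ is diagonal, so $\widehat{\Sigma}_{-i,-i}^{-1}$ is also diagonal and the $j$-th column of $Z^{(\kappa)}_{-i}=Y_{-i}^{(\kappa)}\widehat{\Sigma}_{-i,-i}^{-1}$ equals $\vctr{y}_{j}^{(\kappa)}/(s_{jj}+\kappa)$. Initialize the inner coordinate descent at $\sigmabdi=\vctr{0}$. Equation~(\ref{EQ:cd.tau}) then yields $\hat{\tau}_{i}=\|\vctr{y}_{i}^{(\kappa)}\|_{2}^{2}/n = s_{ii}+\kappa$ using the identity above. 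For any $j\in\bnd{i}$, the quantity $B_{ij}$ in (\ref{EQ:cd.sigma}) collapses (all other $\sigma_{im}$ are zero) to
\[
B_{ij} \;=\; \frac{1}{n\hat{\tau}_{i}}\cdot\frac{(\vctr{y}_{j}^{(\kappa)})^{\top}\vctr{y}_{i}^{(\kappa)}}{s_{jj}+\kappa} \;=\; \frac{s_{ij}}{(s_{ii}+\kappa)(s_{jj}+\kappa)},
\]
and $|B_{ij}|\leq\lambda_{MAX}(\kappa)\leq\lambda$, so the soft-threshold returns $\hat{\sigma}_{ij}=0$. Hence the inner CD is a fixed point at $\sigmabdi=\vctr{0}$, and the outer update sets $\hat{\vctr{\sigma}}_{-i,i}=\vctr{0}$ and $\hat{\sigma}_{ii}=\hat{\tau}_{i}=s_{ii}+\kappa$, leaving $\widehat{\Sigma}$ equal to $D$. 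Since this holds for every $i\in V$, GICF outputs $\widehat{\Sigma}=D$.

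The main obstacle is really just the bookkeeping with $Y^{(\kappa)}$ and $S^{(\kappa)}$, in particular verifying the two inner-product identities above and tracking the factor $1/(s_{jj}+\kappa)$ from the inverse of the diagonal $\widehat{\Sigma}_{-i,-i}$. Once those are in place, both parts collapse to a comparison against $\lambda_{MAX}(\kappa)$; the subdifferential calculus in (i) and the soft-threshold zero in (ii) are in fact two faces of the same computation.
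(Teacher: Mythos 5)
Your argument is correct, and the two parts split naturally against the paper's treatment. For part (ii) you do essentially what the paper does: its Lemmas~\ref{THM:max.lambda.alg.2}--\ref{LEMMA.suff.l.max} trace one outer sweep of Algorithm~\ref{ALG.algo.detailed} started at $\diag{S^{(\kappa)}}$ and arrive at exactly your computation $B_{ij}=s_{ij}/[(s_{ii}+\kappa)(s_{jj}+\kappa)]$, so the soft-threshold returns zero and the diagonal matrix is a fixed point. For part (i), however, you take a genuinely different route: you verify d-stationarity directly, by computing the gradient $-\Sigma^{-1}+\Sigma^{-1}S^{(\kappa)}\Sigma^{-1}$ of the smooth part at $D=\diag{S^{(\kappa)}}$ (diagonal entries vanish, off-diagonal entries equal $s_{ij}/[(s_{ii}+\kappa)(s_{jj}+\kappa)]$) and checking membership in the subdifferential $[-\lambda,\lambda]$ of the $\ell_1$ term at zero, the cancelling factors of $2$ from the symmetric parameterization being absorbed into your ``proportional to.'' The paper instead deduces (i) from (ii) by observing that $\diag{S^{(\kappa)}}$ is an accumulation point of the constant iterate sequence and invoking Proposition~\ref{PROP.convergence}, which asserts that accumulation points of \algoname{} are stationary. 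Your route is more self-contained --- it does not lean on the convergence proposition, whose proof the paper only sketches by citation --- and it makes transparent that the threshold $\lambda_{MAX}(\kappa)$ is exactly the first-order optimality condition at the diagonal point; the paper's route is shorter given the machinery already assembled. One small caveat worth stating explicitly in your version: the verification presumes $s_{ii}+\kappa>0$ for all $i$ so that $D$ is positive definite and interior to $\mathcal{S}_{\succ}(\G)$, which holds whenever $\kappa>0$ or no column of $Y$ is identically zero (and is implicitly assumed by the paper as well).
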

\begin{proof}
See Appendix~\ref{AX:proof.diag.S}
\end{proof}

Theorem~\ref{THM.diag.S} shows that for $\lambda$ sufficiently large the \algoname\ algorithm returns a diagonal solution. A different role is played by the ridge penalty $\kappa$. Indeed, if the covglasso penalty is not included, that is, if $\lambda$ is set to zero, then the ridge-regularized estimator is never diagonal, unless $S$ is diagonal or a diagonal solution is forced by setting $\{i,j\}\not\in \G$ for every $s_{ij}\neq 0$.  However, as shown in Theorem~\ref{THM:kappa_max} below,  $\lambda_{MAX}(\kappa)$  is a decreasing function of $\kappa$ so that when $\lambda$ is greater than zero, it is always possible to set $\kappa$ to a value sufficiently large to imply that $\lambda_{MAX}(\kappa)$ is smaller than $\lambda$ and, consequently, the solution is diagonal by  Theorem~\ref{THM.diag.S}. Hence, the values of $\lambda$ and $\kappa$ cannot be chosen independently of each other, and we now formally state the relationship existing between these two quantities.
\begin{theorem}\label{THM:kappa_max}
In the framework of Theorem~\ref{THM.diag.S}, assume that $\lambda_{MAX}(\kappa)$ is not identically zero. Then $\lambda_{MAX}(\kappa)$ is strictly decreasing and approaches $0$ as $\kappa$ increases, so that $\lambda_{MAX}(\kappa) \leq \lambda_{MAX}(0)$ for every $\kappa\geq 0$. Furthermore, for every $0 < \lambda \leq \lambda_{MAX}(0)$, it holds that
\begin{align*}
\lambda \leq \lambda_{MAX}(\kappa)
\quad \Longleftrightarrow\quad
\kappa \leq \kappa_{MAX}(\lambda),
\end{align*}
where
\begin{align*}
\kappa_{MAX}(\lambda) = \max_{\substack{\{i,j\} \in \G\\g_{ij}(\lambda) \geq 0}}\left\{\sqrt{\frac{1}{4}(s_{ii} + s_{jj}) + g_{ij}(\lambda)} - \frac{1}{2}(s_{ii} + s_{jj})\right\}
\end{align*}
with $g_{ij}(\lambda) = \frac{|s_{ij}|}{\lambda} - s_{ii}s_{jj}$. Finally, the set $\big\{\{i,j\} \in \G: g_{ij}(\lambda) \geq 0\big\}$ is not empty and, consequently, $\kappa_{MAX}(\lambda)$ is well-defined.
\end{theorem}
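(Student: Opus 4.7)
The plan is to reduce every claim of the theorem to an elementary analysis of the single-edge function
\[
f_{ij}(\kappa) = \frac{|s_{ij}|}{(s_{ii}+\kappa)(s_{jj}+\kappa)}, \qquad \kappa\geq 0,
\]
so that $\lambda_{MAX}(\kappa) = \max_{\{i,j\}\in\G} f_{ij}(\kappa)$. The three claims---strict monotonicity and limit of $\lambda_{MAX}$, the equivalence characterising $\kappa_{MAX}(\lambda)$, and non-emptiness of the indexing set---will each be handled in turn.

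For the monotonicity and limit, I first note that $S$ being positive semidefinite forces $s_{ii},s_{jj}\geq 0$ and, whenever $s_{ij}\neq 0$, the $2\times 2$ principal submatrix being positive semidefinite forces $s_{ii},s_{jj}>0$. Hence for every edge contributing a nonzero value to the max, the denominator $(s_{ii}+\kappa)(s_{jj}+\kappa)$ is strictly positive and strictly increasing on $[0,\infty)$, so $f_{ij}$ is strictly decreasing with $f_{ij}(\kappa)\to 0$. Since $\lambda_{MAX}$ is not identically zero, at least one edge contributes nontrivially, and the maximum of finitely many strictly decreasing continuous functions that vanish at infinity is itself strictly decreasing and vanishes at infinity. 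In particular $\lambda_{MAX}(\kappa)\leq \lambda_{MAX}(0)$ for all $\kappa\geq 0$.

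For the equivalence, I rewrite the pointwise condition $f_{ij}(\kappa)\geq \lambda$ (with $\lambda>0$) as the quadratic inequality in $\kappa$,
\[
\kappa^{2} + (s_{ii}+s_{jj})\kappa - g_{ij}(\lambda) \leq 0,
\]
with $g_{ij}(\lambda) = |s_{ij}|/\lambda - s_{ii}s_{jj}$. Evaluating the left-hand side at $\kappa=0$ gives $-g_{ij}(\lambda)$, so if $g_{ij}(\lambda)<0$ no $\kappa\geq 0$ satisfies it, while if $g_{ij}(\lambda)\geq 0$ the quadratic formula yields the unique nonnegative root
\[
\kappa^{*}_{ij}(\lambda) = \sqrt{\tfrac{1}{4}(s_{ii}+s_{jj})^{2} + g_{ij}(\lambda)} - \tfrac{1}{2}(s_{ii}+s_{jj}),
\]
and the inequality holds precisely for $0\leq \kappa\leq \kappa^{*}_{ij}(\lambda)$. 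Because $\lambda\leq \lambda_{MAX}(\kappa)$ iff at least one edge realises $f_{ij}(\kappa)\geq \lambda$, taking the disjunction over edges produces
\[
\lambda\leq \lambda_{MAX}(\kappa) \iff \kappa\leq \max_{\{i,j\}\in\G,\ g_{ij}(\lambda)\geq 0} \kappa^{*}_{ij}(\lambda),
\]
which is the claimed characterisation in terms of $\kappa_{MAX}(\lambda)$. Finally, for non-emptiness of the indexing set, I use $f_{ij}(0) = |s_{ij}|/(s_{ii}s_{jj})$, so that $g_{ij}(\lambda)\geq 0$ is equivalent to $\lambda\leq f_{ij}(0)$. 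Picking any edge $\{i^{*},j^{*}\}$ that attains $\lambda_{MAX}(0)$ (available since the maximum is over a finite set and is not identically zero), the hypothesis $\lambda\leq \lambda_{MAX}(0) = f_{i^{*}j^{*}}(0)$ yields $g_{i^{*}j^{*}}(\lambda)\geq 0$, so the indexing set is non-empty and $\kappa_{MAX}(\lambda)$ is well-defined.

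The main obstacle I anticipate is the careful bookkeeping around edges with $s_{ij}=0$ (which never enter the relevant maxima) and around the threshold case $g_{ij}(\lambda)=0$, which gives $\kappa^{*}_{ij}(\lambda)=0$ and must be consistent with the equivalence at the boundary $\lambda=\lambda_{MAX}(0)$, where strict monotonicity forces $\kappa_{MAX}(\lambda)=0$. Beyond this, the argument is a direct reduction to a one-variable quadratic inequality together with the elementary observation that a finite maximum exceeds $\lambda$ iff some term does.
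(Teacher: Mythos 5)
Your proof is correct and follows essentially the same route as the paper's: reduce to the per-edge functions $\lambda_{ij}(\kappa)=|s_{ij}|/\{(s_{ii}+\kappa)(s_{jj}+\kappa)\}$, convert $\lambda\leq\lambda_{ij}(\kappa)$ into the quadratic inequality $\kappa^{2}+(s_{ii}+s_{jj})\kappa-g_{ij}(\lambda)\leq 0$, solve for its unique nonnegative root, and verify non-emptiness of the indexing set via an edge attaining $\lambda_{MAX}(0)$. The only cosmetic difference is that you assemble the biconditional in one step as a disjunction over edges, whereas the paper proves the forward implication directly and the reverse by contradiction.
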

\begin{proof}
See Appendix~\ref{AX:proof.kappa.max}
\end{proof}
Theorem~\ref{THM:kappa_max} can be readily applied to identify a set of suitable penalty values. Indeed, it shows that for every $\lambda\geq \lambda_{MAX}(0)$ the solution is diagonal, regardless of the value of $\kappa$, thereby providing
an upper bound to the suitable $\lambda$ values. Furthermore, for every $0 < \lambda\leq \lambda_{MAX}(0)$ the solution becomes diagonal for every $\kappa\geq \kappa_{MAX}(\lambda)$ so that it makes sense to search for  the optimal values of $\lambda$ and $\kappa$  within the set,
\begin{align}\label{EQN:feasible.set}
\Hs =  \left\{(\lambda, \kappa) \in \R^2: 0 \leq \lambda \leq \lambda_{MAX}(0) \mbox{ and } 0\leq \kappa \leq \kappa_{MAX}(\lambda)\right\},
\end{align}
where for $\lambda=0$ the inequality $0\leq \kappa \leq \kappa_{MAX}(0)$ should be read as $0\leq \kappa<\infty$.
Equation (\ref{EQN:feasible.set}) defines the set $\Hs$ in a way that facilitates the construction of a finite subset $\Hs^{\prime}\subset \Hs$ of parameter values because it provides upper bounds to both the $\lambda$ and the $\kappa$ values.  In this work, we obtain $\Hs^{\prime}$ as follows. Firstly, we specify an increasing sequence $\lambda_{1},\ldots,\lambda_{r}$
of $r$ values equally spaced between $0$ and $\lambda_{MAX}(0)$. Then every $\lambda_{i}$, for $i=1,\ldots, r$, is paired with the elements of an increasing sequence $\kappa_{1},\ldots,\kappa_{s_{i}}$ of $s_{i}$ values equally spaced between $0$ and $\kappa_{MAX}(\lambda_{i})$. We initially assign to $r$ and $s_{1}$ two (large) integer values and then, for every $i=2,\ldots, r$, we set $s_{i}$ to the smallest integer larger than $s_{1}\times \kappa_{MAX}(\lambda_{i})/\kappa_{MAX}(\lambda_{1})$. In this way, for every $\lambda_{i}$ the number of $\kappa$ values is reduced proportionally to the size of the interval of suitable values. In the case where $\lambda_{1}=0$, in the procedure for the specification of the $\kappa$ values, we replace $\lambda_{1}$ with a positive value smaller than $\lambda_{2}$.

As model selection criterion, we employ an $M$-fold cross validation. To this end, let $\mathcal{C}_1, \dots, \mathcal{C}_{M}$ be a partition of the set of indices $\mathcal{N} = \{1, \dots, n\}$ and let $Y(\mathcal{C}_m) \in \R^{|\mathcal{C}_m| \times p}$ be the submatrix of $Y$ obtained by selecting the rows indexed by the elements of $\mathcal{C}_m$. Also, let $\widehat{\Sigma}_{\kappa, \lambda}(\mathcal{C}_m)$ be the estimate of $\Sigma$ produced by the \algoname\ algorithm, as in Algorithm~\ref{ALG.algo.detailed}, when provided with the data matrix $Y(\mathcal{C}_m)$ and the penalty coefficients $\kappa$ and $\lambda$. Hence, following, among others, \citet{warton2008penalized} and \citet{bien2011sparse}, we consider the cross-validated loglikelihood,
\begin{equation}\label{EQ:cv.criterion}
(\hat{\kappa}, \hat{\lambda}) = \argmax_{(\kappa, \lambda) \in \Hs^{\prime}} \sum_{m = 1}^M\ell_{Y(\mathcal{C}_m)}\{\widehat{\Sigma}_{\kappa, \lambda}(\mathcal{N} \setminus \mathcal{C}_m)\}.
\end{equation}

Figure~\ref{IMG:cv.profile} provides an example of the profile of the function over $\Hs^{\prime}$, based on a set of simulated data. The dashed line represents $\lambda_{MAX}(\kappa)$. This, as expected, is a decreasing function of $\kappa$. Moreover, it appears that the maximum theoretical value of $\lambda$ decreases considerably as $\kappa$ increases. This suggests that the identification of the set $\Hs$ may effectively reduce the dimension of the search space and thus improve efficiency. It is also interesting to note that if $\kappa$ is na\"ively set to a value close to zero, then a local optimum is found at a large value of $\lambda$, whereas if $\kappa$ is allowed to vary over a larger interval, then a better solution is identified at a lower value of $\lambda$. We will see in the simulations of Section~\ref{SCT:simulations} that this will typically correspond to a less sparse solution.
\begin{figure}
\centering
\includegraphics[width = 0.6\linewidth]{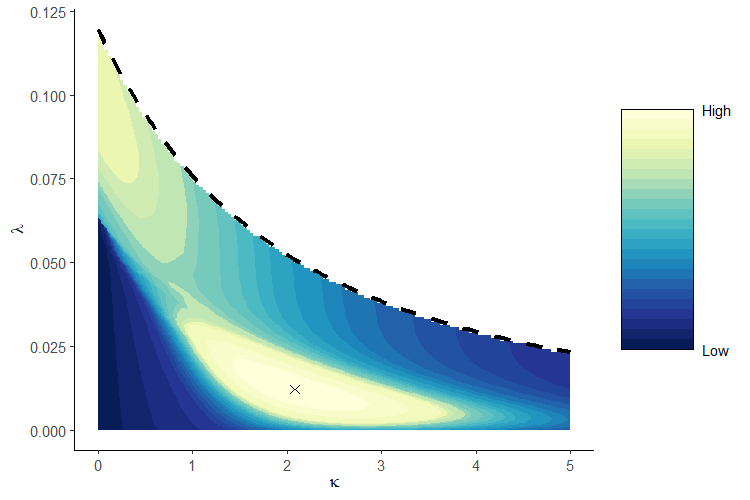}
\caption{An example of the profile of the cross-validated loglikelihood (\ref{EQ:cv.criterion}), based on simulated data. The dashed line is the function $\lambda_{MAX}(\kappa)$: over and above this line the solution is diagonal. The `$\times$' symbol denotes the global optimum.}\label{IMG:cv.profile}
\end{figure}
\section{Simulation study}\label{SCT:simulations}

In this section, we evaluate the performance of the proposed method via a simulation study. We highlight the specific cases mentioned  in Table~\ref{TAB:different.types.estimators} and focus on the added benefits of having a unified approach. In particular, our proposed method includes a large class of models that can be explored via the parameters $\lambda$, $\kappa$ and $\G$, as opposed to more standard techniques which generally focus only on specific cases within this class.

\subsection{Simulation setting}\label{SCT:simulation.setting}
Across all settings, we simulate data from a multivariate normal distribution with mean $\vctr{0}$ and a covariance matrix $\Sigma \in \R^{p \times p}$. The covariance matrix is taken to have a banded structure, that is, $\sigma_{ij} = 0$ whenever $|i - j| > b$, with $b$ an integer value which will be chosen so that the graph $\G$ which encodes the true sparsity pattern of $\Sigma$ has a predetermined density. The non-zero off-diagonal entries are randomly selected in \{-1, 1\} with equal probability, whereas the diagonal values are fixed to that constant value which makes the condition number of $\Sigma$,  i.e., the ratio between its largest and smallest eigenvalue, equal to $p$. Such a method for the construction of the diagonal has been employed, among others, also by \citet{bien2011sparse}.

\subsection{Ridge-regularized estimator under a graph \boldmath $\G$}\label{SCT:simul.mle}
In this simulation study, we consider the case when a graph $\G$ encodes a given sparsity pattern of the covariance matrix $\Sigma$. The aim is to estimate $\Sigma$ within the space $\Set_{\succ}(\G)$. We are therefore particularly interested in the comparison between the  MLE of $\Sigma$ in $\Set_{\succ}(\G)$, which can be computed using the ICF algorithm of \citet{chaudhuri2007estimation} when $n>p$ (case \ref{TABROW:different.types.estimators.hybmle} in Table~\ref{TAB:different.types.estimators}), and  our ridge-regularized estimator of $\Sigma$ (case \ref{TABROW:different.types.estimators.hybrrmle} in Table~\ref{TAB:different.types.estimators}), which is an extension of \citet{warton2008penalized} (case \ref{TABROW:different.types.estimators.rrmle} in Table \ref{TAB:different.types.estimators}) to the case of a pre-specified sparsity pattern $\G$ and is available also in the high dimensional case.

For the simulation, we fix the number of variables to $p=50$ and consider a banded covariance matrix $\Sigma$, with a density varying across three levels ($10\%, 30\%$ and $50\%$). We then simulate data with sample sizes varying in $n \in \{45, 75, 100, 250, 500, 1000\}$, so as to consider both the high dimensional ($p>n$) and low dimensional ($p<n$) cases.  The MLEs of $\Sigma$ have been computed using the ICF algorithm only when $n > p$, while the ridge-regularized estimators have been computed for all cases. For the latter, we set $\lambda=0$, since $\G$ is given, and select $\kappa$ via $5$-fold cross-validation from a uniform grid of $30$ values in the interval $[0, 3]$.

We repeat the simulations 10 times and report the average values in  Figure~\ref{IMG.sim.mle}.
\begin{figure}[!htb]
\centering
	\hspace{-1.2cm}
	\begin{subfigure}[t]{0.47\textwidth}
	\centering
	\caption{\label{IMG.sim.mlea}}
	\includegraphics[width = \linewidth]{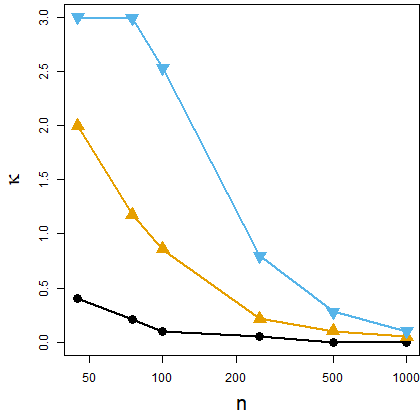}
	\end{subfigure}
	\hspace{0.5cm}
	\begin{subfigure}[t]{0.47\textwidth}
	\centering
	\caption{\label{IMG.sim.mleb}}
	\includegraphics[width = \linewidth]{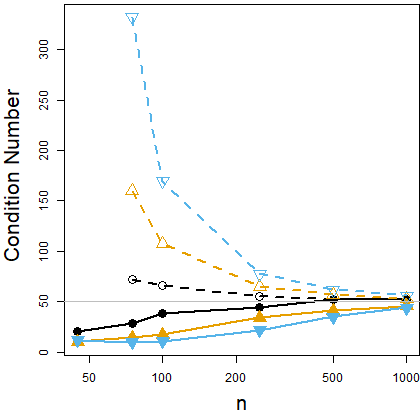}
	\end{subfigure}

	\includegraphics[width=0.4\textwidth]{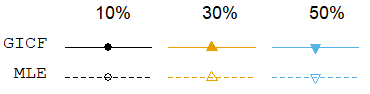}
	
\caption{Average results from 10 simulations with sample size $n \in \{45, 75, 100, 250, 500, 1000\}$, $p=50$ and banded covariance matrices $\Sigma$ associated to graphs $\G$ with varying density levels (10\%, 30\%, 50\%). Comparison between the ridge-regularized \algoname{} ($\kappa \geq 0,\, \lambda = 0$) and the MLE ($\kappa = 0,\, \lambda = 0$) in $\Set_\succ(\G)$ shows how  (a) the optimal $\kappa$ decreases with $n$, i.e., ridge regularization is particularly useful for small $n$ and (b) the ridge-regularized $\widehat{\Sigma}$ has a condition number closer to the true one (horizontal line) than the MLE, i.e., it is a more stable estimate.}\label{IMG.sim.mle}
\end{figure}
Figure~\ref{IMG.sim.mlea} shows how the optimal value of $\kappa$ decreases as the sample size $n$ increases. This is to be expected as stabilization of the sample covariance becomes less important the larger $n$ is. The fact that $\kappa$ goes to $0$ as $n$ goes to infinity is formally shown by \citet[Theorem 2]{warton2008penalized} for the case of a complete graph $\G$. More importantly, the results show how the regularization induced by $\kappa$, which has the effect of rendering the matrix $S^{(\kappa)}$ nonsingular even when $S$ is, and is therefore essential when $n < p$, retains its importance also when $n$ is greater than $p$, and can be neglected only when $n$ becomes very large compared to $p$.
Figure~\ref{IMG.sim.mleb} reports the condition number of $\widehat{\Sigma}$, giving an indication of the stability of the estimate \citep{won13}. The results show how the MLE estimators, which are calculated when $n>p$, largely overestimate the true condition number when $n$ is small. In contrast to this, the ridge-regularized estimators are significantly more stable than the MLE estimators and are well-conditioned also in the high dimensional case of $n<p$, despite some underestimation. As $n$ increases, both estimators converge to the true condition number (horizontal line). As regards to the three levels of graph sparsity, we see qualitatively similar results in terms of the behaviour of the optimal $\kappa$ against $n$ for the different levels. In addition, the results show how the denser the graph is, the less stable the estimator is, and thus the more important ridge-regularization becomes. Putting all results together, this simulation study shows how the ridge regularization induced by the tuning parameter $\kappa$ is important also when $n$ is larger than $p$, and how the optimal level of regularization depends both on the sample size $n$ and on the graph sparsity.

\subsection{Covglasso estimator under a partially known graph \boldmath $\G$}\label{SCT:simul.time}
We now focus on the case of $\kappa=0$ where our approach reduces to the covglasso of \citet{lam2009sparsistency}. 
We focus in particular on the case where a sparsity pattern $\G$ is given. While this case is naturally embedded within our framework, the implementation of covglasso using the algorithm of \citet{wang2014coordinate}, which is available in the \texttt{R} package \texttt{covglasso}, can also be tuned to this case. In particular, as suggested by \citet[Section 6]{bien2011sparse}, one can set sufficiently large penalties to those entries of the covariance matrix which are  $0$ in $\G$. As we will argue in this section, our proposed algorithm is computationally more efficient than the \texttt{covglasso} algorithm in this case.

As before, we consider banded covariance matrices, with varying $p \in \{25, 50, 100, 200, 300\}$ and a number of bands $b \in \{3, 10\}$. We choose a fixed number of bands in order to maintain a constant neighbourhood size. For each setting, 10 datasets of $n = 2000$ observations are simulated.
For simplicity, we compare MLE estimates in $\Set_\succ(\G)$ for both methods. To this end, we set $\lambda=0$ in our GICF algorithm (row \ref{TABROW:different.types.estimators.hybmle} in Table~\ref{TAB:different.types.estimators}), i.e., our implementation of the ICF algorithm of \citet{chaudhuri2007estimation}, while for \texttt{covglasso}, we set $\lambda=0$ for all non-zero off-diagonal elements of $\Sigma$ and $\lambda = 10^6$ for those entries which are $0$ in $\G$.

Table~\ref{TBL:simul.time} reports the average computational time required to reach convergence across the different settings and algorithms.
\begin{table}[!tb]
\centering
\caption{Average computational times (in seconds) for \algoname{} and \texttt{covglasso} from 10 simulations with sample size $n=2000$, $p \in \{25, 50, 100, 200, 300\}$ and covariance $\Sigma$ associated to graphs $\G$  with $b \in \{3, 10\}$ number of bands. Comparison between the \algoname{} ($\kappa = 0,\, \lambda = 0$) and the \texttt{covglasso} algorithms for the calculation of MLE in $\Set_\succ(\G)$ shows how \algoname{} is computationally more efficient.}\label{TBL:simul.time}
\begin{tabular}{|c| c |c|c| c |c|c|}
\hline
$p$  & & \multicolumn{2}{|c|}{3 Bands} & & \multicolumn{2}{|c|}{10 Bands}\\
\cline{3 - 7}
 &  & {\footnotesize \algoname} & {\footnotesize \texttt{covglasso}} & & {\footnotesize \algoname} & {\footnotesize \texttt{covglasso}} \\
 \hline
 \hline
 25  		   & & 0.028 & 0.008      & & 0.055 & 0.005 \\
 \hline
 50   		   & & 0.023 & 0.056     & & 0.020 & 0.042 \\
 \hline
 100 		   & & 0.093 & 1.962     & & 0.240 & 1.807 \\
 \hline
 200 		   & & 0.515 & 21.765   & & 1.231 & 19.765 \\
 \hline
 300 		   & & 2.760 & 110.817 & & 4.780 & 76.118 \\
 \hline
\end{tabular}
\end{table}
The results show how our proposed algorithm is generally faster than \texttt{covglasso}, with a computational gain that increases the larger $p$ is.

\subsection{Ridge-regularized covglasso} \label{SCT:simulations.lasso}
In this final simulation study, we consider the general case of sparse estimation of $\Sigma \in \Set_{\succ}$. The main competitor here is covglasso  \citep{lam2009sparsistency}, 
with a parameter $\lambda$ that can be tuned to induce sparsity in the estimates of the covariance. The interest is to compare the behaviour of the covglasso estimator with the most general version of our proposed ridge-regularized covglasso, with a tuning parameter $\kappa$ that induces stability in the estimates across the different sparsity levels controlled by $\lambda$.

As in Section~\ref{SCT:simul.mle}, we fix the number of variables to $p=50$, consider a banded covariance matrix $\Sigma$, with a density varying across three levels ($10\%, 30\%$ and $50\%$), and simulate data with sample sizes  $n \in \{45, 75, 100, 250, 500, 1000\}$.
The parameters $\lambda$ and $\kappa$ for our approach are selected via $5$-fold cross-validation using the maximum theoretical values and the strategy derived in Section~\ref{SCT:model.selection}. When $n>p$,  a comparison is made with covglasso and the same grid of values is chosen for the  tuning parameter $\lambda$.

Figure~\ref{IMG.sim.lasso} reports the average results across 10 simulations. Figure~\ref{IMG.sim.lassoa} shows the same behaviour for $\kappa$ across $n$ that was observed in Figure~\ref{IMG.sim.mlea}, i.e., the need for stabilizing the sample covariance matrix, particularly when $n$ is small. Figure~\ref{IMG.sim.lassob} plots the optimal values of $\lambda$, normalized by their maximum theoretical value when $\kappa=0$. The results, confirmed also by Figure~\ref{IMG.sim.lasso.fullf} in Appendix~\ref{AX:simulations.lasso}, show how the stabilization induced by $\kappa$ allows the ridge-regularized covglasso approach to select denser covariance matrices than covglasso. In particular, this also shows how the optimal value of $\kappa$ depends on characteristics of the model that are unknown prior to inference, and thus that a careful tuning of this parameter is needed.

Figures~\ref{IMG.sim.lassoc} and \ref{IMG.sim.lassod}, and the additional figures reported in Appendix~\ref{AX:simulations.lasso}, further evaluate the quality of the estimators in terms of the accuracy of the estimation and of the recovery of the sparsity pattern. For the former, we consider the root mean square error (RMSE), defined by  $\|\widehat{\Sigma} - \Sigma\|_F/p^2$ (Figures \ref{IMG.sim.lassoc}) and the entropy loss (Figure~\ref{IMG.sim.lasso.fullb} in Appendix~\ref{AX:simulations.lasso}), while for the latter we use the $F_1$ score (Figure \ref{IMG.sim.lassod}) and the true positive rate (Figure \ref{IMG.sim.lasso.fullc}), true negative rate (Figure~\ref{IMG.sim.lasso.fulld}) and positive predictive value (Figure~\ref{IMG.sim.lasso.fulle}). The plots show how the $\widehat{\Sigma}$ matrices estimated by the ridge-regularized covglasso approach have better properties compared to those estimated by covglasso, across all settings and particularly so for small sample sizes. The results on the graph recovery confirm how, for small $n$, our proposed estimator results in denser graphs and further show how these graphs are closer to the true ones. Indeed, compared to covglasso, the graphs recovered by our procedure are associated to a higher true positive rate (Figure \ref{IMG.sim.lasso.fullc}). This overcompensates from the lower positive predictive value (Figure~\ref{IMG.sim.lasso.fulle}), as the harmonic mean of these two measures, which gives the $F_1$ score (Figure \ref{IMG.sim.lassod}), shows an an overall superior performance of the ridge-regularized covglasso estimators.

\newcommand{\locsubfwidth}{0.45\textwidth}
\newcommand{\locmidspace}{0.5cm}
\begin{figure}[!htbp]
\vspace*{-15bp}
\centering
	\begin{subfigure}[t]{\locsubfwidth}
	\centering
	\caption{\label{IMG.sim.lassoa}}
	\includegraphics[width = \linewidth]{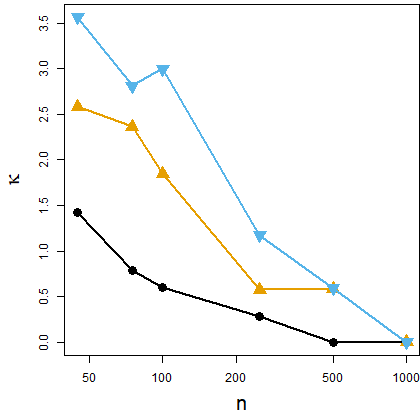}
	\end{subfigure}
	\hspace{\locmidspace}
	\begin{subfigure}[t]{\locsubfwidth}
	\centering
	\caption{\label{IMG.sim.lassob}}
	\includegraphics[width = \linewidth]{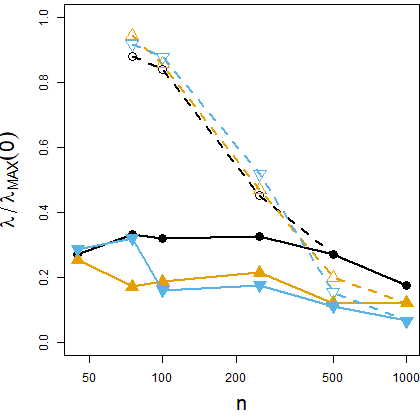}
	\end{subfigure}
	
	\vspace{-0.3cm}
	\begin{subfigure}[t]{\locsubfwidth}
	\centering
	\caption{\label{IMG.sim.lassoc}}
	\includegraphics[width = \linewidth]{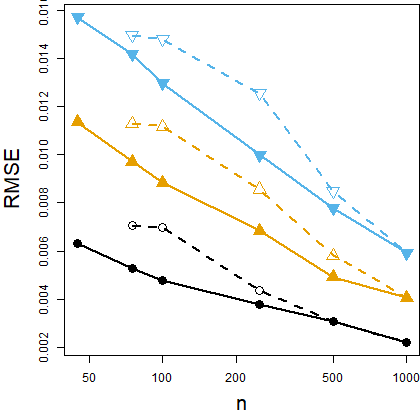}
	\end{subfigure}
	\hspace{\locmidspace}
	\vspace{-2bp}
	\begin{subfigure}[t]{\locsubfwidth}
	\centering
	\caption{\label{IMG.sim.lassod}}
  \includegraphics[width = \linewidth]{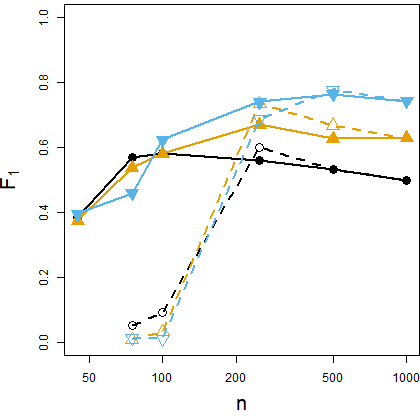}
	\end{subfigure}
	\vspace{2bp}
	\includegraphics[scale=0.6]{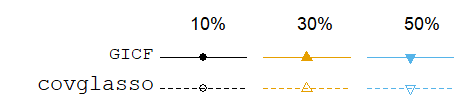}
\caption{Average results from 10 simulations with sample size $n \in \{45, 75, 100, 250, 500, 1000\}$, $p=50$ and banded covariance matrices $\Sigma$ associated to graphs $\G$ with varying density levels (10\%, 30\%, 50\%). Comparison between \algoname{} ($\kappa \geq 0,\, \geq 0$) and covglasso ($\kappa = 0,\, \lambda \geq 0$) shows how  (a) the optimal $\kappa$ decreases with $n$, (b) the optimal $\lambda/\lambda_{MAX}(0)$ is smaller for \algoname{}, suggesting denser estimated graphs, (c)  $\widehat{\Sigma}$ of \algoname{} has a smaller RMSE than covglasso, i.e., a more accurate estimation of $\Sigma$ (d) \algoname{} has a higher $F_1$ score than covglasso, i.e., a better recovery of the true graphs $\G$.}\label{IMG.sim.lasso}
\end{figure}

\section{Analysing sonar data} \label{SEC:application}
In this section, we show an illustration of our approach to the sonar dataset  from the UCI Machine Learning Repository (\url{https://doi.org/10.24432/C5T01Q}).
The data contain sonar signals summarised into $60$ continuous features in the range $[0, 1]$. Each feature represents a measure of the energy within a specific frequency band. The signals are measured on $208$ spectra, of which $97$ are obtained from a rock sample and $111$ from a metal cylinder. Each observation is measured at a different angle on the same object. Following previous analyses, we assume independence across the spectra \citep{rothman2010cholesky}. Further details about the pre-processing of the signals can be found in \citet{gorman1988learned}.

\begin{figure}[!tbp]
\centering
\begin{tabular}{m{1.2cm} m{0.25\linewidth} m{0.25\linewidth} m{0.25\linewidth} p{1.5cm}}
 & \multicolumn{1}{c}{Sample} & \multicolumn{1}{c}{Banded} & \multicolumn{1}{c}{Unstructured} & \\
\rotatebox[origin=c]{0}{Rock} & \includegraphics[width = \linewidth]{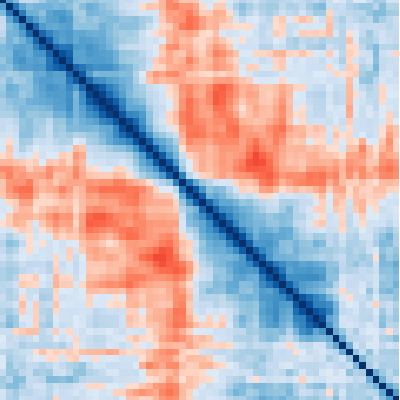} & \includegraphics[width = \linewidth]{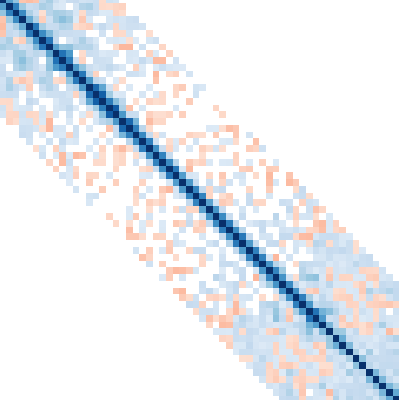} & \includegraphics[width = \linewidth]{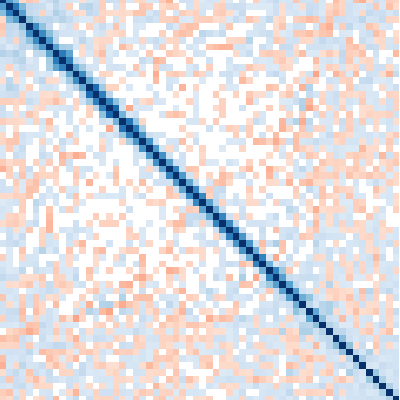} & \multirow{2}{*}{\includegraphics[width = 0.6\linewidth]{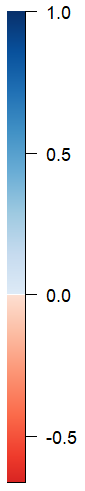}}\\
\rotatebox[origin=c]{0}{Metal} & \includegraphics[width = \linewidth]{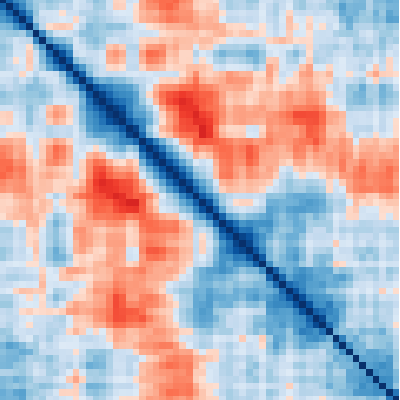} & \includegraphics[width = \linewidth]{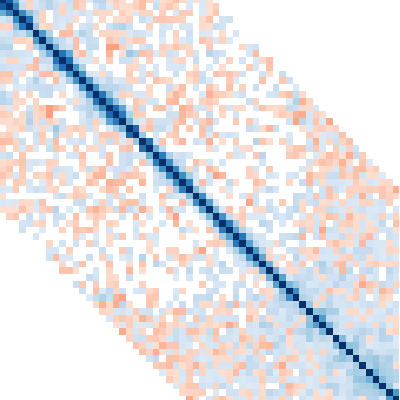} & \includegraphics[width = \linewidth]{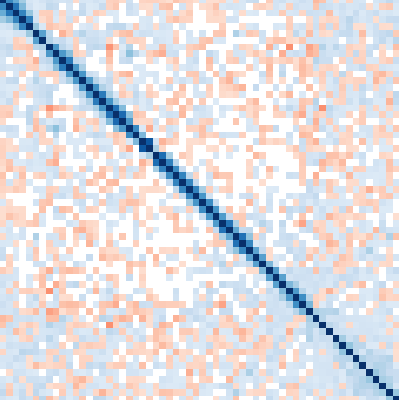} &
\end{tabular}
\caption{Analsyis of sonar data: heatmaps of the sample (left) and ridge-regularized covglasso correlation matrices for the rock  (top) and  metal (bottom) sonar spectra. For our method, we consider both a prespecified banded structure of zeros (middle) and no prior specification (right). }\label{IMG.real.data.estimates}
\end{figure}
Figure~\ref{IMG.real.data.estimates} (left) shows the sample correlation matrices for the metal and rock samples. The nature of the features, sorted by frequency bands, explains the specific structure that is visible on this plot, with correlations that decay away from the diagonal. This has motivated the use of banded estimators of $\Sigma$ on these data \citep{rothman2010cholesky}. In particular, in their paper, \cite{rothman2010cholesky} identify a banded structure with $31$ bands for metal and $17$ bands for rock.
In this paper, we relax this assumption using our proposed approach. In particular, we consider two cases of ridge-regularized covglasso. In a first scenario, we impose the zeros in $\G$ induced by the banded structures of \citet{rothman2010cholesky} but allow the tuning parameter $\lambda$ to refine this structure further, i.e., to detect additional sparsity within the banded structures. In a second scenario, we do not impose any a priori knowledge of $\G$. In both cases, we stabilize the estimators by tuning the parameter $\kappa$, and we select both $\kappa$ and $\lambda$ via $10$-fold cross-validation.

The middle and right plots of Figure~\ref{IMG.real.data.estimates} show the estimated correlation matrices using the two approaches, respectively. For ease of visualization, we plot the MLEs of the correlation matrices constrained to the sparsity patterns $\G$ detected by the method.
The results show how the banded estimates identify some additional zeros within the bands. These are also confirmed by the unstructured estimates. The latter, however, do not show any pronounced banded pattern, beside the rapidly decaying correlations from the main off-diagonal, which are common to all the estimates presented in Figure~\ref{IMG.real.data.estimates}. Overall, these results suggest that imposing a banded structure on these data may be a too strong constraint.

As a further validation, and following \citet{rothman2010cholesky}, we employ our ridge-regularized covglasso estimators within a quadratic discriminant analysis, in order to classify spectra into rock and metal based on the sonar signals.  In particular, we classify each observation $\vctr{x}$ based on
\[\argmax_{j \in \{\text{rock},\text{metal}\}}\{\log|\widehat{\Sigma}_j^{-1}| - (\vctr{x} - \hat{\vctr{\mu}}_j)^\top\widehat{\Sigma}_j^{-1}(\vctr{x} - \hat{\vctr{\mu}}_j) + 2\log{\hat{\pi}_j}\},\] where $\widehat{\Sigma}_j, \hat{\vctr{\mu}}_j, \hat{\pi}_j$ are, respectively, the estimated covariance matrix, the sample mean and the proportion of observations belonging to class $j$. We evaluate the performance of the approaches using 5-fold cross-validation, where the tuning parameters $\lambda$ and $\kappa$ for the estimation of $\widehat{\Sigma}_j$ are selected from each leave-one-out set via $10$-fold cross-validation.

Table~\ref{TAB:sonar.qda} summarises the results for both the banded and unstructured cases discussed above. For completeness, together with the most general \algoname{} version discussed above, we also include the cases when $\kappa$ and/or $\lambda$ are set to zero, which correspond to, or get close to, existing methods as discussed in the simulation study.
The results show how, in general, the introduction of the tuning parameter $\kappa$ improves the classification error with respect to the MLE. On the other hand, the sole introduction of the tuning parameter $\lambda$ does not improve the classification error, but it is beneficial if used jointly with $\kappa$, and the highest improvement is achieved when both regularization parameters are employed. Furthermore, we find that the unstructured estimates perform significantly better than their banded counterparts.

\begin{table}[!tbp]
\caption{The test errors (in percentage) for the quadratic discriminant analysis performed on sonar data to discriminate metal and rock spectra.}
\label{TAB:sonar.qda}
\begin{center}
\begin{tabular}{|c|*{4}{c|}}
\hline
\multicolumn{1}{|c}{\multirow{2}{*}{}} & \multicolumn{4}{|c|}{Parameter settings} \\
\cline{2-5}
& $\lambda=\kappa=0$ & $\lambda \geq 0, \kappa=0$ & $\lambda = 0, \kappa\geq 0$ & $\lambda \geq 0, \kappa \geq 0$ \\
\hline\hline\setcounter{rowcntr}{1}
Banded & 25.83 & 34.65 & 25.44 & 21.09 \\
\hline
Unstructured & 23.08 & 28.68 & 19.72 & 17.24\\
\hline
\end{tabular}
\end{center}
\end{table}

\section{Conclusions}\label{SEC:conclusions}

A large number of different methods have been introduced in the literature for the estimation of covariance matrices in high dimensions, which is of interest in a number of applications. The task is made challenging by the poor behaviour of the sample covariance matrix in the high dimensional setting. In this paper, we have proposed a regularized Gaussian loglikelihood approach which is able to accommodate both a sparsity-inducing $\ell_{1}$-penalty and a stability-inducing ridge penalty. We have derived a computationally efficient ICF-type algorithm for the optimization of the regularized loglikelihood and have implemented a cross-validated loglikelihood procedure for the selection of the tuning parameters. The implementation allows sparse estimation also in high dimensional settings and can naturally account for a partially known sparsity pattern in the covariance matrix.

A simulation study showed a good performance of the proposed approach, both in terms of identification of the sparsity structure and properties of the estimator of the unknown covariance. This is particularly the case for small sample sizes, though the ridge regularization is found to play an important role also when $n$ is not much larger than $p$. Both the simulation study and the illustration on sonar data show the advantages of a joint application of both penalties, and therefore the need for a principled approach for tuning the two penalties simultaneously. In particular, the simulation study shows how the ridge-regularization term allows the procedure to tune the underlying sparsity level better than when using the $\ell_{1}$-penalty alone, leading to graphs with a higher $F_{1}$ score.  The unifying procedure that we have proposed means that either of the regularization terms can also be selected to be zero, leading to special cases of existing methods. These can now be applied also in the presence of a partially known sparsity pattern, in the case of \citet{warton2008penalized}, and/or a high dimensional setting, in the case of \citet{wang2014coordinate}.  

\section*{Code and data availability}
The  \texttt{R} package \packagename{} and the \texttt{R} scripts for reproducing the simulations of Section \ref{SCT:simulations} and the illustration of Section \ref{SEC:application} can be found at \url{https://github.com/luca-cibinel/gicf}.

\section*{Acknowledgments}
Alberto Roverato and Veronica Vinciotti acknowledge funding from the the European Union - Next Generation EU, Mission 4 Component 2 - CUP C53D23002580006 (MUR-PRIN grant 2022SMNNKY).
Luca Cibinel was supported by the National Cancer Institute of the National Institutes of Health (U24CA180996).

\bibliographystyle{chicago}
\bibliography{hcovglasso-ref}




\newpage
\begin{appendices}

\setcounter{equation}{0}
\setcounter{figure}{0}
\setcounter{table}{0}
\setcounter{theorem}{0}
\setcounter{example}{0}
\setcounter{definition}{0}
\setcounter{algorithm}{0}
\newcommand{\axpref}{A}
\renewcommand{\theequation}{\axpref\arabic{equation}}
\renewcommand{\thefigure}{\axpref\arabic{figure}}
\renewcommand{\thetable}{\axpref\arabic{table}}
\renewcommand{\thelemma}{\axpref\arabic{lemma}}
\renewcommand{\theprop}{\axpref\arabic{prop}}
\renewcommand{\thetheorem}{\axpref\arabic{theorem}}
\renewcommand{\thecorollary}{\axpref\arabic{corollary}}
\renewcommand{\theexample}{\axpref\arabic{example}}
\renewcommand{\thedefinition}{\axpref\arabic{definition}}
\renewcommand{\thealgorithm}{\axpref\arabic{algorithm}}

\section{Pseudocodes}\label{APP:pseudocodes}
We provide here, in Algorithms~\ref{ALG.algo.detailed} and \ref{ALG.cd.lin.detailed}, the pseudocodes for the implementation of the \algoname{} algorithm described in Section~\ref{SEC:hybrid.procedure}.
{
\begin{algorithm}
	\caption{\algoname{} algorithm for the solution of (\ref{EQ:shrinkage.covglasso.estimator})}\label{ALG.algo.detailed}
	\begin{algorithmic}[1]
	\Require $\G = (V, E)$,\; $\widehat{\Sigma}^{(0)}$,\; $Y$,\;  $\lambda$ and $\kappa$
	\State $r\gets 0$
	\State $n \gets \text{number of rows of}\ Y$
	\State $Y^{(\kappa)} \gets (Y^\top,\, \sqrt{n\kappa}I_p)^\top$
	\Repeat
		\State $r \gets r + 1$
		\State $\widehat{\Sigma}^{(r)} \gets \hat{\Sigma}^{(r-1)}$
		\ForAll{$i \in V$}
            \If{$\bnd{i}\neq \emptyset$}\Comment{{\footnotesize $\bnd{i}$ is computed with respect to $\G$}}
			\State $\widehat{Z}^{(r, \kappa)}_{\bnd{i}} \gets Y^{(\kappa)}_{-i}\left[\left(\widehat{\Sigma}^{(r)}_{-i,-i}\right)^{-1}\right]_{-i,\bnd{i}}$
			\State $\hat{\tau}^{(r)}_i,\; \sigmavh_{\bnd{i},i}^{(r)} \gets$ \Call{\algoname{}step}{$\sigmavh_{\bnd{i},i}^{(r)},\; \vctr{y}^{(\kappa)}_i,\; \widehat{Z}^{(r, \kappa)}_{\bnd{i}},\; \lambda,\; n$}\label{LINE.call.step}\Comment{{\footnotesize see Algorithm~\ref{ALG.cd.lin.detailed}}}
            \Else
                \State $\hat{\tau}^{(r)}_{i} \gets \frac{1}{n}\left(\vctr{y}_i^{(\kappa)}\right)^\top\vctr{y}_i^{(\kappa)}$
            \EndIf
			\State $\sigmavh_{-i \setminus \bnd{i},i}^{(r)} \gets \vctr{0}$\label{LINE.set.sigma.zeros}
			\State $\hat{\sigma}_{ii}^{(r)} \gets \hat{\tau}_i^{(r)} + \left(\sigmavh^{(r)}_{-i,i}\right)^\top\left(\widehat{\Sigma}^{(r)}_{-i,-i}\right)^{-1}\sigmavh^{(r)}_{-i,i}$
\label{LINE.update.sigmaii}
		\EndFor
	\Until{\textbf{convergence is reached}}
	\State \textbf{return} $\widehat{\Sigma}^{(r)}$
	\end{algorithmic}
\end{algorithm}

\begin{algorithm}
	\caption{Coordinate-descent algorithm for the optimization of (\ref{EQ:lin.reg.touse})}\label{ALG.cd.lin.detailed}
	\begin{algorithmic}[1]
	\Procedure{\algoname{}step}{$\hat{\vctr{\beta}}^{(0)},\; \vctr{y},\; Z,\; \lambda,\; n$}
	\State $r\gets 0$
	\State $d \gets \text{number of elements of}\ \hat{\vctr{\beta}}^{(0)}$
	\Repeat
		\State $r \gets r + 1$
		\State $\hat{\tau}^{(r)} \gets \frac{1}{n}\|\vctr{y} - Z\hat{\vctr{\beta}}^{(r-1)}\|_2^2$\label{LINE.cd.assign.tau}
		\State $\hat{\vctr{\beta}}^{(r)}\gets \hat{\vctr{\beta}}^{(r-1)}$
		\For{$j = 1, \dots, d$} 
			\State $B_{j} \gets \frac{1}{n\hat{\tau}^{(r)}}\left(\left[Z^\top \vctr{y}\right]_{j} - \sum_{\substack{m = 1\\m \neq j}}^d\left[Z^\top Z\right]_{jm}\hat{\beta}^{(r)}_{m}\right)$\label{LINE.compute.B}
			\State $C_{j} \gets \frac{1}{n\hat{\tau}^{(r)}}\left[Z^\top Z\right]_{jj}$
			\State $\hat{\beta}^{(r)}_{j} \gets \frac{1}{C_{j}}\mathcal{S}_{\lambda}\left(B_{j}\right)$\label{LINE.soft.thr}
		\EndFor
	\Until{\textbf{convergence is reached} }
	\State \textbf{return} $\hat{\tau}^{(r)}$ and $\hat{\vctr{\beta}}^{(r)}$
	\EndProcedure
	\end{algorithmic}
\end{algorithm}
}

\section{Preliminary lemmas}\label{AX:bounds}
In this section we present some results concerning the behaviour of Algorithms~\ref{ALG.algo.detailed} and \ref{ALG.cd.lin.detailed}, which will be exploited in the proofs of Theorems~\ref{THM.diag.S} and \ref{THM:kappa_max}.

We first provide a sufficient condition on the value of $\lambda$ for Algorithm~\ref{ALG.cd.lin.detailed} to converge to a zero vector of regression coefficients.
\begin{lemma}\label{THM:max.lambda.alg.2}
Consider Algorithm~\ref{ALG.cd.lin.detailed}. If it holds that
\begin{align}\label{EQN:condition.lambda.A2}
\lambda \geq \frac{1}{ns^{2}}\max_{j \in \{1,\ldots, d\}}{\left| \left[Z^\top \vctr{y}\right]_j\right|}
\end{align}
where $s^{2} = \frac{1}{n}\vctr{y}^\top\vctr{y}$, and the starting point $\hat{\vctr{\beta}}^{(0)}$ is set to $\vctr{0}$,  then
the algorithm converges, in one iteration, to $\hat{\tau}=s^{2}$ and $\hat{\vctr{\beta}}=\vctr{0}$.
\end{lemma}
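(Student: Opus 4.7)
The plan is to verify the claim by direct computation, tracking the values produced by the algorithm under the stated initialization and the hypothesis on $\lambda$. The key observation is that the soft-thresholding operator $\mathcal{S}_\lambda$ annihilates any input with absolute value at most $\lambda$, so the bound on $\lambda$ is precisely what guarantees that, once started at zero, every coordinate update stays at zero.

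Concretely, I would proceed as follows. First, plug $\hat{\vctr{\beta}}^{(0)} = \vctr{0}$ into line~\ref{LINE.cd.assign.tau} of Algorithm~\ref{ALG.cd.lin.detailed} to obtain
\begin{align*}
\hat{\tau}^{(1)} = \frac{1}{n}\|\vctr{y} - Z\vctr{0}\|_2^2 = \frac{1}{n}\vctr{y}^\top \vctr{y} = s^{2}.
\end{align*}
Then I would argue by (finite) induction on the inner loop index $j=1,\dots,d$ that $\hat{\beta}^{(1)}_{j}=0$ for each $j$. The base case and the inductive step are really the same calculation: when the first $j-1$ entries of $\hat{\vctr{\beta}}^{(1)}$ have already been set to zero and the remaining entries still carry their initial value zero, the formula on line~\ref{LINE.compute.B} collapses to
\begin{align*}
B_{j} \;=\; \frac{1}{n\hat{\tau}^{(1)}}\bigl[Z^\top \vctr{y}\bigr]_{j} \;=\; \frac{1}{ns^{2}}\bigl[Z^\top \vctr{y}\bigr]_{j}.
\end{align*}
By hypothesis (\ref{EQN:condition.lambda.A2}), $|B_{j}|\leq \lambda$, so $\mathcal{S}_\lambda(B_{j}) = \operatorname{sign}(B_{j})(|B_{j}|-\lambda)_{+}=0$, and therefore $\hat{\beta}^{(1)}_{j}=0$ on line~\ref{LINE.soft.thr}. (Since $C_{j}>0$ is guaranteed whenever the algorithm is called on a well-defined regression, division by $C_{j}$ is harmless.)

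Finally, I would conclude by observing that $\hat{\vctr{\beta}}^{(1)}=\vctr{0}=\hat{\vctr{\beta}}^{(0)}$, so the convergence test at the end of the \textbf{Repeat} block is satisfied after a single pass, and the procedure returns $\hat{\tau}=s^{2}$ and $\hat{\vctr{\beta}}=\vctr{0}$, as claimed. There is no real obstacle here: the only subtlety worth flagging is that the induction must respect the coordinate-descent order, since after updating $\hat{\beta}^{(1)}_{j}$ later $B_{j'}$ values use the most recent iterates, and it is exactly the fact that those most recent iterates are still zero (by the inductive hypothesis) that keeps the simplification of $B_{j'}$ valid.
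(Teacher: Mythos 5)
Your proof is correct and follows essentially the same route as the paper's: initialize at $\hat{\vctr{\beta}}^{(0)}=\vctr{0}$, observe $\hat{\tau}^{(1)}=s^{2}$, and show coordinate by coordinate that $|B_j|\leq\lambda$ forces the soft-thresholding step to leave every entry at zero, so the iterate is a fixed point. The explicit remark about respecting the coordinate-descent update order is a nice touch that the paper handles only implicitly ("the same argument for every $j=2,\dots,d$").
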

\begin{proof}
In order to prove the result it is sufficient to set $\hat{\vctr{\beta}}^{(0)}=\vctr{0}$ and to show that, if (\ref{EQN:condition.lambda.A2}) holds true, then at the beginning of
the second iteration of the repeat-cycle of Algorithm~\ref{ALG.cd.lin.detailed} one has $\hat{\tau}^{(1)}=s^{2}$ and  $\hat{\vctr{\beta}}^{(1)}=\hat{\vctr{\beta}}^{(0)}=\vctr{0}$. As consequence, the second, as well as every subsequent iteration,  will be executed  from the same staring point and will thus produce the same output, so that $\hat{\tau}^{(r)}=s^{2}$ and  $\hat{\vctr{\beta}}^{(r)}=\vctr{0}$, for every $r\geq 1$.

Because $\hat{\vctr{\beta}}^{(0)}=\vctr{0}$, the first iteration of the  repeat-cycle of the algorithm sets $\hat{\tau}^{(1)}=s^{2}$ in  line~\ref{LINE.cd.assign.tau}. Next, $\hat{\vctr{\beta}}^{(1)}$ is set to $\hat{\vctr{\beta}}^{(0)}=\vctr{0}$, and then the for-cycle updates, in turn, every entry $\hat{\beta}^{(1)}_{j}$ of $\hat{\vctr{\beta}}^{(1)}$, for $j=1,\ldots, d$. For $j=1$, it follows from
line~\ref{LINE.soft.thr} of the algorithm that $\hat{\beta}^{(1)}_{1}$ does not change its value, that is
$\hat{\beta}^{(1)}_{1}=0$, if and only if
$\mathcal{S}_{\lambda}\left(B_{1}\right)=0$, that is if and only if $\lambda\geq |B_{1}|$, and this inequality holds true by (\ref{EQN:condition.lambda.A2}), because we can see from line~\ref{LINE.compute.B} of the algorithm that,
\begin{align*}
|B_{1}|
=\left|\frac{1}{n\hat{\tau}^{(1)}}\left(\left[Z^\top \vctr{y}\right]_{1} - \sum_{m = 2}^{d}\left[Z^\top Z\right]_{1m}\hat{\beta}^{(1)}_{m}\right)\right|
=\frac{1}{ns^{2}}\left| \left[Z^\top \vctr{y}\right]_{1}\right|
\leq \frac{1}{ns^{2}}\max_{j \in \{1,\ldots, d\}}{\left| \left[Z^\top \vctr{y}\right]_{j}\right|}.
\end{align*}
We can then use the same argument for every $j=2,\dots, d$ to show that  the second iteration of the repeat-cycle  will be executed from $\hat{\tau}^{(1)}=s^{2}$ and $\hat{\vctr{\beta}}^{(1)}=\vctr{0}$, as required.
\end{proof}
Algorithm~\ref{ALG.cd.lin.detailed} is called by Algorithm~\ref{ALG.algo.detailed}, and we now restate Lemma~\ref{THM:max.lambda.alg.2} above with respect to such call.
\begin{lemma}\label{THM:call.to.alg.A2}
Consider the call to Algorithm~\ref{ALG.cd.lin.detailed} in line~\ref{LINE.call.step} of Algorithm~\ref{ALG.algo.detailed}, that is,
\begin{center}
$\hat{\tau}^{(r)}_{i},\; \sigmavh_{\bnd{i},i}^{(r)} \gets$ \Call{\algoname{}step}{$\sigmavh_{\bnd{i},i}^{(r)},\; \vctr{y}^{(\kappa)}_i,\; \widehat{Z}^{(r, \kappa)}_{\bnd{i}},\; \lambda,\; n$},
\end{center}
for $i\in V$. If the input values are such that
$\sigmavh_{\bnd{i},i}^{(r)}=\vctr{0}$ and
\begin{align}\label{EQN:condition.lambda.call}
\lambda \geq \frac{1}{ns^{(\kappa)}_{ii}}\max_{j \in \bnd{i}}{\left| \left[(\widehat{Z}_{\bnd{i}}^{(r, \kappa)})^\top \vctr{y}_i^{(\kappa)}\right]_j\right|}
\end{align}
where $s^{(\kappa)}_{ii} = \frac{1}{n}\left(\vctr{y}_i^{(\kappa)}\right)^\top\vctr{y}_i^{(\kappa)}$, then the output values are
$\sigmavh_{\bnd{i},i}^{(r)}=\vctr{0}$ and $\hat{\tau}^{(r)}_{i}=s^{(\kappa)}_{ii}$.
\end{lemma}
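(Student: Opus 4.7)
The plan is simply to recognize this as a direct specialization of Lemma~\ref{THM:max.lambda.alg.2} to the concrete call made by Algorithm~\ref{ALG.algo.detailed}, and to verify that the bookkeeping matches. I would begin by setting up the correspondence between the generic arguments of Algorithm~\ref{ALG.cd.lin.detailed} (namely $\hat{\vctr{\beta}}^{(0)}$, $\vctr{y}$, $Z$, $\lambda$, $n$) and the specific arguments supplied in line~\ref{LINE.call.step} of Algorithm~\ref{ALG.algo.detailed}: the starting regression vector is $\sigmavh_{\bnd{i},i}^{(r)}$, the response is $\vctr{y}^{(\kappa)}_i$, the design matrix is $\widehat{Z}^{(r,\kappa)}_{\bnd{i}}$, the penalty is $\lambda$, and the sample size is $n$. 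Under this identification, the dimension $d$ of Lemma~\ref{THM:max.lambda.alg.2} equals $|\bnd{i}|$, so the index set $\{1,\ldots,d\}$ becomes $\bnd{i}$.

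Next, I would translate the quantity $s^2 = \frac{1}{n}\vctr{y}^\top\vctr{y}$ of Lemma~\ref{THM:max.lambda.alg.2} into its counterpart in the present notation, which is precisely
\begin{align*}
s^{(\kappa)}_{ii} = \frac{1}{n}\left(\vctr{y}_i^{(\kappa)}\right)^\top\vctr{y}_i^{(\kappa)}.
\end{align*}
With this substitution, condition~(\ref{EQN:condition.lambda.A2}) of Lemma~\ref{THM:max.lambda.alg.2} becomes exactly condition~(\ref{EQN:condition.lambda.call}) of the current lemma. By hypothesis the starting vector $\sigmavh_{\bnd{i},i}^{(r)}$ coincides with $\vctr{0}$, which is the initialization required by Lemma~\ref{THM:max.lambda.alg.2}.

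Having checked that the hypotheses of Lemma~\ref{THM:max.lambda.alg.2} are met verbatim, I would apply its conclusion: the inner coordinate-descent converges in one iteration, returning $\hat{\vctr{\beta}} = \vctr{0}$ and $\hat{\tau} = s^{2}$. Re-expressing these outputs in the notation of the current call yields the claimed $\sigmavh_{\bnd{i},i}^{(r)} = \vctr{0}$ and $\hat{\tau}^{(r)}_i = s^{(\kappa)}_{ii}$, which completes the argument. Since the lemma is a direct restatement of the previous one after a change of symbols, no genuine obstacle is expected; the only care required is to ensure that the maximum in~(\ref{EQN:condition.lambda.call}) is taken over $\bnd{i}$ rather than over all of $V\setminus\{i\}$, which is consistent with the fact that Algorithm~\ref{ALG.cd.lin.detailed} is called only on the submatrix $\widehat{Z}^{(r,\kappa)}_{\bnd{i}}$ and the subvector $\sigmavh_{\bnd{i},i}^{(r)}$ of regression coefficients constrained by the graph $\G$.
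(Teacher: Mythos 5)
Your proposal is correct and follows exactly the paper's own argument: the paper likewise proves this lemma by identifying the arguments of the call with the generic inputs of Lemma~\ref{THM:max.lambda.alg.2} (setting $\hat{\vctr{\beta}}^{(0)}=\sigmavh_{\bnd{i},i}^{(r)}$, $\vctr{y}=\vctr{y}^{(\kappa)}_{i}$, $Z=\widehat{Z}^{(r,\kappa)}_{\bnd{i}}$) and reading off the conclusion. Your additional check that the index set becomes $\bnd{i}$ and that $s^{2}$ translates to $s^{(\kappa)}_{ii}$ is exactly the bookkeeping the paper leaves implicit.
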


\begin{proof}
This follows immediately from Lemma~\ref{THM:max.lambda.alg.2} by setting the input values
$\hat{\vctr{\beta}}^{(0)}=\sigmavh_{\bnd{i},i}^{(r)}$ $\vctr{y}=\vctr{y}^{(\kappa)}_{i}$  and $Z=\widehat{Z}^{(r, \kappa)}_{\bnd{i}}$, so that the output values are $\hat{\tau}=\hat{\tau}^{(r)}_{i}$  and $\hat{\vctr{\beta}}=\sigmavh_{\bnd{i},i}^{(r)}$.
\end{proof}
Finally, we provide a sufficient condition for the \algoname{} algorithm to provide a diagonal solution.
\begin{lemma}[Sufficient condition for a diagonal solution]\label{LEMMA.suff.l.max}
Consider Algorithm \ref{ALG.algo.detailed} (\algoname{}). If it holds that
\begin{equation}\label{EQ.one.step.diagonal}
\lambda \geq \max_{\{i,j\} \in \G}\frac{|s_{ij}|}{(s_{ii}+\kappa) (s_{jj}+\kappa)},
\end{equation}
and the starting point is set to $\widehat{\Sigma}^{(0)} = \diag{S^{(\kappa)}}$, then $\widehat{\Sigma}^{(r)} = \diag{S^{(\kappa)}}$ for each $r\geq 1$.
\end{lemma}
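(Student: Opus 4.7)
The plan is to proceed by a double induction: an outer induction on the iteration counter $r$ of Algorithm~\ref{ALG.algo.detailed}, with an inner induction on the order in which the indices $i\in V$ are visited during one iteration. The outer base case $r=0$ is the hypothesis on the starting point. For the outer inductive step, assuming $\widehat{\Sigma}^{(r-1)} = \diag{S^{(\kappa)}}$, line~6 initialises $\widehat{\Sigma}^{(r)}$ to the same diagonal matrix, and I need to show that every pass through the for-loop preserves the identity $\widehat{\Sigma}^{(r)} = \diag{S^{(\kappa)}}$. The crucial observation is that, since processing index $i$ only modifies the $i$-th row and column of $\widehat{\Sigma}^{(r)}$, the submatrix $\widehat{\Sigma}^{(r)}_{-i,-i}$ entering the pseudo-predictor computation is still diagonal with entries $(s_{jj}+\kappa)$ for $j\in V\setminus\{i\}$.

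Once the diagonal structure of $\widehat{\Sigma}^{(r)}_{-i,-i}$ is in place, its inverse is the diagonal matrix with entries $1/(s_{jj}+\kappa)$, so that for every $j\in\bnd{i}$ the corresponding column of $\widehat{Z}^{(r,\kappa)}_{\bnd{i}}$ equals $\vctr{y}^{(\kappa)}_{j}/(s_{jj}+\kappa)$. A direct computation then gives
\begin{align*}
\left[(\widehat{Z}^{(r,\kappa)}_{\bnd{i}})^\top \vctr{y}^{(\kappa)}_{i}\right]_{j}
= \frac{(\vctr{y}^{(\kappa)}_{j})^\top \vctr{y}^{(\kappa)}_{i}}{s_{jj}+\kappa}
= \frac{n\, s_{ij}}{s_{jj}+\kappa},
\end{align*}
where I have used $s^{(\kappa)}_{ij}=s_{ij}$ for $i\neq j$ because ridge only affects the diagonal. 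Combined with $s^{(\kappa)}_{ii}=s_{ii}+\kappa$, the right-hand side of condition~(\ref{EQN:condition.lambda.call}) of Lemma~\ref{THM:call.to.alg.A2} reduces to $\max_{j\in\bnd{i}}|s_{ij}|/\{(s_{ii}+\kappa)(s_{jj}+\kappa)\}$, which is bounded above by the hypothesis~(\ref{EQ.one.step.diagonal}).

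Since the inductive hypothesis guarantees that the input vector $\sigmavh_{\bnd{i},i}^{(r)}$ is zero at the time of the call, Lemma~\ref{THM:call.to.alg.A2} applies and the call in line~\ref{LINE.call.step} returns $\sigmavh_{\bnd{i},i}^{(r)}=\vctr{0}$ and $\hat{\tau}^{(r)}_{i}=s_{ii}+\kappa$. Line~\ref{LINE.set.sigma.zeros} then sets the remaining entries of $\sigmavh_{-i,i}^{(r)}$ to zero, and line~\ref{LINE.update.sigmaii} yields $\hat{\sigma}^{(r)}_{ii}=\hat{\tau}^{(r)}_{i}+0=s_{ii}+\kappa$, so the $i$-th row and column of $\widehat{\Sigma}^{(r)}$ coincide with those of $\diag{S^{(\kappa)}}$, completing the inner inductive step. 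The degenerate case $\bnd{i}=\emptyset$ is handled directly by the alternative branch, which sets $\hat{\tau}^{(r)}_{i}=s_{ii}+\kappa$ explicitly.

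The bookkeeping in the inner induction is the only delicate aspect: one must check that partial updates performed earlier in the for-loop do not destroy the diagonal structure of $\widehat{\Sigma}^{(r)}_{-i,-i}$ used later. This is true precisely because each earlier pass set the relevant off-diagonal entries to zero and the relevant diagonal entry to $s_{jj}+\kappa$, so the diagonal form is maintained throughout. After the for-loop terminates, $\widehat{\Sigma}^{(r)} = \diag{S^{(\kappa)}}$, closing the outer induction.
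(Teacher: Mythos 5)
Your proof is correct and follows essentially the same route as the paper's: both arguments show that each pass of the for-loop leaves $\diag{S^{(\kappa)}}$ unchanged by computing the pseudo-predictors under the diagonal ansatz, reducing condition (\ref{EQN:condition.lambda.call}) to the hypothesis (\ref{EQ.one.step.diagonal}), and invoking Lemma~\ref{THM:call.to.alg.A2}. Your explicit double induction merely formalises the paper's ``repeat the same argument for every $i$ and every iteration'' step.
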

\begin{proof}

In order to prove the result it is sufficient to set $\widehat{\Sigma}^{(0)} = \diag{S^{(\kappa)}}$ and to show that, if (\ref{EQ.one.step.diagonal}) is satisfied, then at the beginning of
the second iteration of the repeat-cycle of Algorithm~\ref{ALG.algo.detailed} one has $\widehat{\Sigma}^{(1)} =\widehat{\Sigma}^{(0)} = \diag{S^{(\kappa)}}$. As consequence, the second, as well as every subsequent iteration,  will be executed from the same staring point and will thus produce the same output, so that $\widehat{\Sigma}^{(r)} = \diag{S^{(\kappa)}}$, for every $r\geq 1$.

Consider the first iteration of the repeat-cycle of the algorithm. Firstly,
$\widehat{\Sigma}^{(1)}$ is set to $\widehat{\Sigma}^{(0)} = \diag{S^{(\kappa)}}$, and then a for-cycle updates, for $i=1,\ldots,p$, the scalar
$\hat{\sigma}^{(1)}_{ii}$ and the vector $\sigmavh_{-i,i}^{(1)}$, which form the $i$-th row and column of $\widehat{\Sigma}^{(1)}$.

We first show that at the end of the first iteration of the for-cycle, i.e. $i=1$, one has $\hat{\tau}^{(1)}_{1}=s_{11}^{(\kappa)}$ and $\sigmavh_{-1,1}^{(1)}=\vctr{0}$, an then, at  line~\ref{LINE.update.sigmaii}, also  $\hat{\sigma}^{(1)}_{11}=\hat{\tau}^{(1)}_{1}=s_{11}^{(\kappa)}$. The result is trivially true if $\bnd{i}=\emptyset$ and, furthermore, because  line~\ref{LINE.set.sigma.zeros} of the algorithm sets the entries
$\sigmavh_{-1 \setminus \bnd{1},1}^{(1)}$ to $\vctr{0}$, then if $\bnd{i}\neq \emptyset$ is sufficient to show that the call
\begin{center}
$\hat{\tau}^{(1)}_{1},\; \sigmavh_{\bnd{1},1}^{(1)} \gets$ \Call{\algoname{}step}{$\sigmavh_{\bnd{1},1}^{(1)},\; \vctr{y}^{(\kappa)}_1,\; \widehat{Z}^{(1, \kappa)}_{\bnd{i}},\; \lambda,\; n$},
\end{center}
returns the values $\hat{\tau}^{(1)}_{1}=s^{(\kappa)}_{11}$ and $\sigmavh_{\bnd{1},1}^{(1)}=\vctr{0}$. This follows from Lemma~\ref{THM:call.to.alg.A2} because, in the above call, the input $\sigmavh_{\bnd{1},1}^{(1)}$ is the zero vector and, furthermore, condition (\ref{EQN:condition.lambda.call}) is satisfied. To formally see the latter,  we have to show that (\ref{EQ.one.step.diagonal}) implies
\begin{align}\label{EQN:condi.lambda.call.first.variable}
\lambda \geq\max_{j \in \bnd{1}}{\left|\left[\frac{1}{ns^{(\kappa)}_{11}} (\widehat{Z}_{\bnd{1}}^{(1, \kappa)})^\top \vctr{y}_1^{(\kappa)}\right]_j\right|},
\end{align}
that is (\ref{EQN:condition.lambda.call}) when $r=1$ and $i=1$. This can be done by noticing that, in (\ref{EQN:condi.lambda.call.first.variable}),
\begin{align*}
\frac{1}{ns^{(\kappa)}_{11}}(\widehat{Z}_{\bnd{1}}^{(1, \kappa)})^\top \vctr{y}_1^{(\kappa)}
&=\frac{1}{ns^{(\kappa)}_{11}}\left[\left(\diag{S^{(\kappa)}}_{-1,-1}\right)^{-1}\right]_{\bnd{1},-1}(Y^{(\kappa)}_{-1})^{\top}\vctr{y}_1^{(\kappa)}\\
&=\frac{1}{s^{(\kappa)}_{11}}\left[\left(\diag{S^{(\kappa)}}_{-1,-1}\right)^{-1}\right]_{\bnd{1},-1}S^{(\kappa)}_{-1,1}\\
&=\left[\frac{s_{1j}^{(\kappa)}}{s^{(\kappa)}_{11}s^{(\kappa)}_{jj}}\right]_{j\in \bnd{1}}\\
&=\left[\frac{s_{1j}}{(s_{11}+\kappa)(s_{jj}+\kappa)}\right]_{j\in \bnd{1}}.
\end{align*}
Hence, if we recall that $j\in \bnd{1}$ if and only if $\{1, j\}\in \G$, we can write (\ref{EQN:condi.lambda.call.first.variable}) as,
\begin{align*}
\lambda
\geq \max_{\{1,j\} \in \G}{\frac{|s_{1j}|}{(s_{11}+\kappa)(s_{jj}+\kappa)}}
\end{align*}
that is implied by (\ref{EQ.one.step.diagonal}) because,
\begin{align*}
\max_{\{1,j\} \in \G}{\frac{|s_{1j}|}{(s_{11}+\kappa)(s_{jj}+\kappa)}}
\leq
\max_{\{i,j\} \in \G}\frac{|s_{ij}|}{(s_{ii}+\kappa) (s_{jj}+\kappa)}.
\end{align*}
We can then use the same argument for every $i=2,\dots, p$ to show that  the second iteration of the repeat-cycle  will be executed from
$\widehat{\Sigma}^{(1)} = \diag{S^{(\kappa)}}$, as required.
\end{proof}

\section{Proofs}\label{AX:proofs}

\subsection{Proof of Proposition~\ref{THM:shrinkage.transform}}\label{PROOF:THM:shrinkage.transform}

In order to prove (\ref{EQ:shrinkage.1}), it is sufficient to show that $\ell_{Y^{(\kappa)},n}(\Sigma)=\ell_{Y}(\Sigma) - \kappa\,\|\diag{\Sigma^{-1}}\|_{1}$, and the latter equality holds true because it follows from (\ref{EQ:log.lik}) that,
\begin{align}
\nonumber
\ell_{Y^{(\kappa)},n}(\Sigma)
&= -\log{|\Sigma|} -\frac{1}{n}\tr(\Sigma^{-1}Y^{(\kappa)\top}Y^{(\kappa)})\\
\label{EQN:ridge-reg-connection}
&= -\log{|\Sigma|} -\frac{1}{n}\tr\{\Sigma^{-1}(Y^{\top}Y+n\kappa I_{p})\}\\
\nonumber
&= -\log{|\Sigma|} -\frac{1}{n}\tr(\Sigma^{-1}Y^{\top}Y)-\kappa\tr( \Sigma^{-1})\\
\nonumber
&= \ell_{Y}(\Sigma) - \kappa\,\|\diag{\Sigma^{-1}}\|_{1}.
\end{align}
Similarly, the equality (\ref{EQ:shrinkage.2}) holds true if and only if $\ell_{Y^{(\kappa)},n}(\Sigma)=-\log{|\Sigma|} -\tr(\Sigma^{-1}S^{(\kappa)})$ and this follows immediately from (\ref{EQN:ridge-reg-connection}).

\subsection{Proof of Corollary~\ref{THM:final.loglik.decomposition}}\label{PROOF:THM:final.loglik.decomposition}
The result follows from (\ref{EQ:shrinkage.1}) by (i) applying the decomposition (\ref{EQ:log.lik.split}) to $\ell_{Y^{(\kappa)},n}(\Sigma)$, (ii) by noticing that
\begin{align*}
\lambda\,\|\Sigma-\diag{\Sigma}\|_1
=\lambda\|\Sigmaii - \diag{\Sigmaii}\|_1
+ 2\lambda\|\sigmaii\|_1
\end{align*}
and, finally, (iii) by recalling that $\Sigma\in\mathcal{S}_{\succ}(\G)$ implies that $\sigmaii^\top = (\sigmabdi^\top, \vctr{0}^\top)$ so that
$Z^{(\kappa)}_{-i}\sigmaii=Z^{(\kappa)}_{\bnd{i}}\sigmabdi$ and
$\|\sigmaii\|_1=\|\sigmabdi\|_1$.

\subsection{Proof of Theorem~\ref{THM.diag.S}}\label{AX:proof.diag.S}
If \algoname{} is invoked with $\widehat{\Sigma}^{(0)} = \diag{S^{(\kappa)}}$ and $\lambda \geq \lambda_{MAX}(\kappa)$, then by Lemma \ref{LEMMA.suff.l.max} the sequence of intermediate estimates $\mathcal{E}$ is identically equal to $\diag{S^{(\kappa)}}$. In particular, $\diag{S^{(\kappa)}}\in\mathcal{S}_{\succ}(\G)$ is an accumulation point of $\mathcal{E}$ and, by Proposition \ref{PROP.convergence}, also a stationary point of  $\ell_{Y}(\Sigma) - \Ps_{\lambda,\kappa}(\Sigma)$ in (\ref{EQ:shrinkage.covglasso.estimator}).

\subsection{Proof of Theorem~\ref{THM:kappa_max}}\label{AX:proof.kappa.max}

We first remark that the assumption $\lambda_{MAX}(\kappa)\neq 0$ implies both that $\G = (V, E)$ is such that $E \neq \emptyset$, and that $S$ is such that $\max_{\{i,j\} \in \G}|s_{ij}|> 0$. Then, for every $\{i, j\} \in \G$, and for $\kappa \geq 0$, we set
\begin{align*}
\lambda_{ij}(\kappa)
= \frac{|s_{ij}|}{(s_{ii} + \kappa)(s_{jj} + \kappa)}
= \frac{|s_{ij}|}{\kappa^2 + (s_{ii} + s_{jj})\kappa + s_{ii}s_{jj}},
\end{align*}
so that
\begin{align*}
\lambda_{MAX}(\kappa) = \max_{\{i,j\} \in \G}\lambda_{ij}(\kappa),
\end{align*}
with $\lambda_{MAX}(\kappa)>0$.
It is also useful to recall that, for every $\kappa\geq 0$ there must exists at least one edge in $\G$, that we denote by $\{i^{\kappa},j^{\kappa}\}$,
such that $\lambda_{MAX}(\kappa)=\lambda_{i^{\kappa}j^{\kappa}}(\kappa)$, that for $\kappa=0$ becomes  $\lambda_{MAX}(0)=\lambda_{i^{0}j^{0}}(0)$.

It is straightforward to see that, for every $\{i,j\} \in \G$ such that $s_{ij}\neq 0$, it holds that $\lambda_{ij}(\kappa)$ is a continuous, positive and strictly decreasing function of $\kappa$, with the property that $\lim_{\kappa \to \infty}\lambda_{ij}(\kappa) = 0$. Hence, the same must hold for $\lambda_{MAX}(\kappa)$ and, therefore, $\lambda_{MAX}(\kappa)\leq \lambda_{MAX}(0)$ for every $\kappa\geq 0$.

Consider the edges $\{i,j\}\in \G$ for which $s_{ij}\neq 0$, so that $\lambda_{ij}(\kappa)$ is strictly positive.
By elementary algebraic manipulation, one can see that, for every $\kappa \geq 0$ and $\lambda>0$, it holds that,
\begin{equation}\label{EQ.equiv.ineq.l.max}
\lambda \leq \lambda_{ij}(\kappa)
\quad\Longleftrightarrow\quad
\kappa^2 + (s_{ii} + s_{jj})\kappa - g_{ij}(\lambda) \leq 0,
\end{equation}
where,
\begin{align*}
g_{ij}(\lambda)=\frac{|s_{ij}|}{\lambda} - s_{ii}s_{jj}.
\end{align*}
The relationship (\ref{EQ.equiv.ineq.l.max}) has some useful implications. Consider the subset of $E$ given by
\begin{align*}
E_{\lambda}=\big\{\{i,j\} \in \G: g_{ij}(\lambda) \geq 0\big\}.
\end{align*}
Firstly, (\ref{EQ.equiv.ineq.l.max}) makes it explicit that the two inequalities are never satisfied when $g_{ij}(\lambda)<0$. As a consequence, $\lambda \leq \lambda_{ij}(\kappa)$ implies that $g_{ij}(\lambda)\geq 0$. Hence, for every $\lambda\leq \lambda_{MAX}(\kappa)=\lambda_{i^{\kappa}j^{\kappa}}(\kappa)$ it holds that $g_{i^{\kappa}j^{\kappa}}(\lambda)\geq 0$. For $\kappa=0$ this means that
$g_{i^{0}j^{0}}(\lambda)\geq 0$ for every $\lambda\leq \lambda_{MAX}(0)= \lambda_{i^{0}j^{0}}(0)$. In other words, $\{i^{\kappa},j^{\kappa}\}\in E_{\lambda}$ for every $\lambda\leq \lambda_{MAX}(\kappa)$
and, furthermore, because by assumption $\lambda\leq \lambda_{MAX}(0)$, then the edge $\{i^{0},j^{0}\}$ is always contained in $E_{\lambda}$, that is therefore never empty. Finally, it can be easily computed that, when $g_{ij}(\lambda) \geq 0$,  the two inequalities in (\ref{EQ.equiv.ineq.l.max}) are satisfied if and only if
\begin{align}\label{EQN:solution.kappa}
\kappa\leq \kappa_{ij}(\lambda)
\quad\mbox{with}\quad
\kappa_{ij}(\lambda)=\sqrt{\frac{1}{4}(s_{ii} + s_{jj})^2 + g_{ij}(\lambda)} - \frac{1}{2}(s_{ii} + s_{jj}).
\end{align}

We can now show that $\lambda \leq \lambda_{MAX}(\kappa)$ implies $\kappa \leq \kappa_{MAX}(\lambda)$. This follows immediately from the points shown above because $\lambda \leq \lambda_{MAX}(\kappa)$ is equivalent to $\lambda\leq \lambda_{i^{\kappa}j^{\kappa}}(\kappa)$ and, thus, by (\ref{EQN:solution.kappa}), we have
\begin{align*}
\kappa \leq \kappa_{i^{\kappa}j^{\kappa}}(\lambda) \leq \kappa_{MAX}(\lambda);
\end{align*}
recall that, $\kappa_{MAX}(\lambda)$ is defined as the largest $\kappa_{ij}(\lambda)$ such that
$\{i,j\} \in E_{\lambda}$, and we have shown that $\{i^{\kappa}, j^{\kappa}\}\in E_{\lambda}$.

We now turn to the reverse implication and show that $\kappa \leq \kappa_{MAX}(\lambda)$ implies  $\lambda \leq \lambda_{MAX}(\kappa)$. This part of the proof is by contradiction. More specifically, we show that assuming $\lambda>\lambda_{MAX}(\kappa)$ contradicts the assumption $\kappa\leq \kappa_{MAX}(\lambda)$. Consider the set $E_{\lambda}$ that, as shown above, is not empty. Then, $\lambda>\lambda_{MAX}(\kappa)$ implies that
$\lambda>\lambda_{ij}(\kappa)$ for every $\{i,j\}\in E_{\lambda}$. Hence,  by (\ref{EQ.equiv.ineq.l.max}), for every $\{i,j\}\in E_{\lambda}$ it also holds that,
\begin{align*}
\kappa^2 + (s_{ii} + s_{jj})\kappa - g_{ij}(\lambda) > 0.
\end{align*}
In turn, it follows from (\ref{EQN:solution.kappa}) that  $\kappa> \kappa_{ij}(\lambda)$ for every $\{i,j\}\in E_{\lambda}$ and, thus, also that
$\kappa> \max_{\{i,j\}\in E_{\lambda}}\kappa_{ij}(\lambda)=\kappa_{MAX}(\lambda)$, and this completes the proof.
\section{Additional results for the simulation study}\label{AX:simulations.lasso}

This appendix contains additional results regarding the simulations presented in Section  \ref{SCT:simulations.lasso}.
Beside the quantities already considered in Figure~\ref{IMG.sim.lasso}, we also have computed the entropy loss, defined by
\[EL=\trace{\widehat{\Sigma}\Sigma^{-1}} + \log|\Sigma\widehat{\Sigma}^{-1}|,\]
as well as the True Positive Rate (TPR), True Negative Rate (TNR) and Positive Predicted Value (PPV).
Figure~\ref{IMG.sim.lasso.full} reports these results, confirming the main conclusions that were drawn in the main text.

\newcommand{\hspacing}{\hspace{0.015\textwidth}}
\newcommand{\subfwidth}{0.35\linewidth}
\begin{figure}[!htbp]
\vspace*{-15bp}
\centering
	\vspace{-2bp}
	\begin{subfigure}[t]{\subfwidth}
	\centering
	\caption{\label{IMG.sim.lasso.fulla}}
	\includegraphics[width = \linewidth]{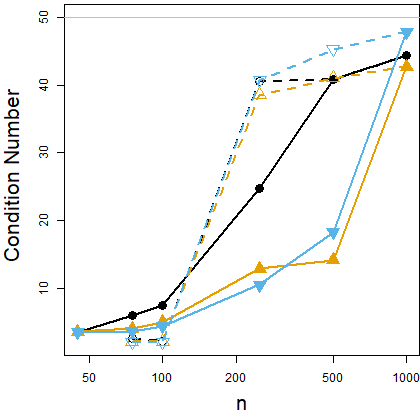}
	\end{subfigure} 
	\hspacing
	\vspace{-2bp}
	\begin{subfigure}[t]{\subfwidth}
	\centering
	\caption{\label{IMG.sim.lasso.fullb}}
	\includegraphics[width = \linewidth]{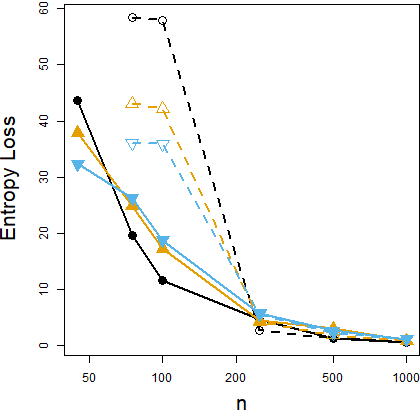}
	\end{subfigure} 
	\begin{subfigure}[t]{\subfwidth}
	\centering
	\caption{\label{IMG.sim.lasso.fullc}}
	\includegraphics[width = \linewidth]{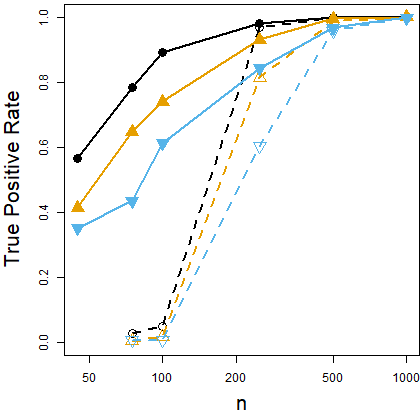}
	\end{subfigure} 
	\hspacing
	\vspace{-2bp}
	\begin{subfigure}[t]{\subfwidth}
	\centering
	\caption{\label{IMG.sim.lasso.fulld}}
	\includegraphics[width = \linewidth]{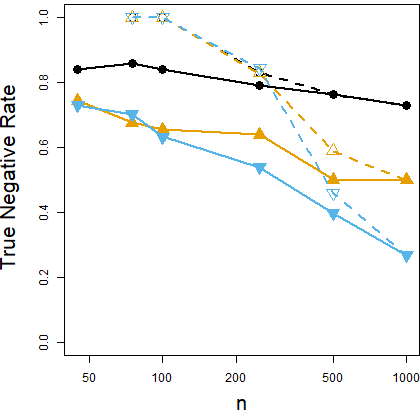}
	\end{subfigure} 
	\hspacing
	\vspace{-2bp}
	\begin{subfigure}[t]{\subfwidth}
	\centering
	\caption{\label{IMG.sim.lasso.fulle}}
	\includegraphics[width = \linewidth]{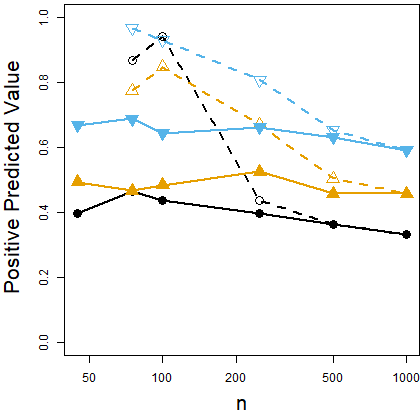}
	\end{subfigure} 
	\hspacing
	\begin{subfigure}[t]{\subfwidth}
	\centering
	\caption{\label{IMG.sim.lasso.fullf}}
	\includegraphics[width = \linewidth]{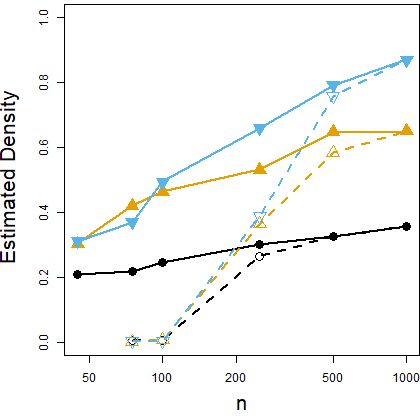}
	\end{subfigure} 

	\vspace{2bp}
	\includegraphics[scale=0.4]{legends/legend_lasso_horiz}
\caption{Average results from 10 simulations with sample size $n \in \{45, 75, 100, 250, 500, 1000\}$, $p=50$ and banded covariance matrices $\Sigma$ associated to graphs $\G$ with varying density levels (10\%, 30\%, 50\%). Comparison between \algoname{} ($\kappa \geq 0,\, \geq 0$) and covglasso ($\kappa = 0,\, \lambda \geq 0$) shows how  (a) Condition number of $\widehat{\Sigma}$ is underestimated in both cases but converges to the true value (horizontal line) for large $n$, (b) Entropy loss of $\widehat{\Sigma}$ with respect to $\Sigma$ is smaller for \algoname{}, i.e., a more accurate estimation, (c)-(d)-(e)-(f)  \algoname{} has a better recovery of the graph $\G$ than covglasso, which tends to select too sparse graphs for small $n$.
}\label{IMG.sim.lasso.full}
\end{figure}

\end{appendices}

\end{document}